\documentclass[fleqn,10pt]{wlscirep}
\usepackage[utf8]{inputenc}
\usepackage[T1]{fontenc}

\usepackage{amsmath,amsthm,amssymb}
\usepackage{epsfig,minibox,multirow,makecell,tikz-cd,comment,xcolor,graphics}
\usepackage{changepage}
\usepackage{hyperref}      
\usepackage{thmtools}
\usepackage{amsfonts}       
\usepackage{graphicx}
\usepackage{tikz}
\usepackage{mathtools}
\usetikzlibrary{matrix,positioning}
\usepackage[caption=false]{subfig} 
\usepackage{caption}
\usepackage{thm-restate}

\usepackage{siunitx}
\usepackage{booktabs}
\usepackage{array}
\usepackage{csvsimple}
\usepackage{adjustbox} 
\usepackage{soul}
\newcolumntype{L}[1]{>{\raggedright\arraybackslash}p{#1}}

\usepackage{rotating}
\usepackage{fix-cm}
\usepackage{relsize}
\tikzset{fontscale/.style = {font=\relsize{#1}}
    }

\usetikzlibrary{decorations.markings}
\usetikzlibrary{calc}

 \usepackage{tikz}
 \usetikzlibrary{arrows}
 \usepackage{listings}

\usepackage{kotex}

\theoremstyle{plain}

\newtheorem{theorem}{Theorem}[section]

\newtheorem{proposition}[theorem]{Proposition}

\theoremstyle{definition}

\theoremstyle{remark}
\newtheorem{remark}[theorem]{Remark}

\title{Zipf–Mandelbrot Scaling in Korean Court Music: Universal Patterns in Music}

\author[1,+]{Byeongchan Choi}
\author[1,+]{Junwon You}
\author[2]{Myung Ock Kim}
\author[1,*]{Jae-Hun Jung}
\affil[1]{POSTECH, Department of Mathmatics, Pohang, 37673, Korea}
\affil[2]{Korea Institute for Advanced Study (KIAS), Seoul, 02455, Korea}

\affil[*]{corresponding author jung153@postech.ac.kr}

\affil[+]{these authors contributed equally to this work}

\keywords{Keyword1, Keyword2, Keyword3}

\begin{abstract}
Zipf’s law, originally discovered in natural language and later generalized to the Zipf–Mandelbrot law, describes a power-law relationship between the frequency of a Zipfian element and its rank. Due to the semantic characteristics of this law, it has also been observed in musical data. However, most such studies have focused on Western music, and its applicability to non-Western music remains not well investigated. We analyzed 43 Korean court music pieces called Jeong-ak, spanning several centuries and written in the traditional Korean musical notation Jeongganbo. These pieces were transcribed into Western staff notation, and musical data such as pitch and duration were extracted. Using pitch, duration, and their paired combinations as Zipfian units, we found that Korean music also fits the Zipf–Mandelbrot law to a high degree, particularly for the paired pitch–duration unit. Korean music has evolved collectively over long periods, smoothing idiosyncratic variations and producing forms that are widely understandable among people. This collective evolution appears to have played a significant role in shaping the characteristics that lead to the satisfaction of Zipf-Mandelbrot law. Our findings provide additional evidence that Zipf–Mandelbrot scaling in musical data is universal across cultures.
We further show that the joint distribution of two independent Zipfian data sets follows the Zipf–Mandelbrot law; in this sense, our result does not merely extend Zipf’s law but deepens our understanding of how scaling laws behave under composition and interaction, offering a more unified perspective on rank-based statistical regularities.
\end{abstract}

\begin{document}

\flushbottom
\maketitle
%
%
\thispagestyle{empty}


\section{Introduction}\label{sec:Introduction}

Zipf's law originally discovered in natural language, describes a power-law relationship between the frequency $f, f \in \mathbb{Z}_{>0}$ and rank $r, r \in \mathbb{Z}_{>0}$ of elements known as the Zipfian elements within a dataset \cite{zipf1935psycho, zipf1949human}, given as 
\begin{equation}
    f(r) \sim \frac{1}{r^s}    
\end{equation}
where $s, s \in \mathbb{R}_{>0}$ is the exponent, i.e. the slope of the power-law decay of $f$ on a logarithmic scale. The original finding that led to Zipf's law suggests that the slope of the power decay is approximately $-1$, i.e., $s \sim 1$. 
This slope value $s$ is related to the semantic structure of a dataset. 
The slope $s \approx 1$ is often interpreted as an indicator of ``efficiency" in balancing redundancy and diversity in communication. In natural language, $s \approx 1$ has been connected to cognitive or semantic structures, optimal information transfer, and communicative efficiency. Thus, by analogy, if another similar dataset found in such as music, ecology, genetics, etc. shows a similar slope value of $s$, it could imply that it also reflects some ``semantic" or ``structural" property of the system that generates the dataset \cite{zipf1935psycho, zipf1949human,mandelbrot1953informational,ferrer2003least}. 
As Zipf's law indicates, the occurrence frequency of ranked elements decays algebraically with increasing rank. Consequently, the most frequent and the rarest elements in a dataset exhibit a large disparity in frequency. This pattern was initially explained in the context of human behavior and psychology, particularly in terms of communicative efficiency and the principle of least effort \cite{zipf1935psycho, zipf1949human}. 
Subsequent research demonstrated that Zipf's law is not limited to natural language dataset but is found in a wide range of empirical data. Examples include city population distributions \cite{hackmann2020zipf}, website visitation frequencies \cite{adamic2002zipf}, and species abundances in ecology \cite{frank2019log}. While the ubiquity of Zipf’s law is remarkable, the underlying reasons for its widespread emergence still remain incompletely understood. 

The original Zipf's law assumes an idealized infinite system and overlooks finite-size effects. In real-world datasets, the most frequent elements often deviate from a pure power-law distribution. To account for such deviations, the Zipf's law was generalized into the Zipf–Mandelbrot law \cite{mandelbrot1953informational, mandelbrot1965information, herdan1960type}, given as 
\begin{equation}
    f(r) \sim \frac{1}{(r+q)^s},    
\end{equation}
where $r$ is the rank, $s$ is the exponent as in Zipf’s law, and $q, q\in\mathbb{R}_{\ge 0}$ is an additional parameter, known as the shift parameter, that shifts the rank to account for finite-size effects and high-frequency deviations. It was shown that natural language data, and later other complex systems such as music, city sizes, and income distributions, deviate systematically from the ideal Zipf's law \cite{mandelbrot1953informational, mandelbrot1965information}. The shift parameter $q$ enables more accurate modeling of empirical frequency distributions. For the top few ranks, when $r$ is small, the effect of $q$ is significant, flattening the curve. For large $r$, $r+q \approx r$, the law reduces to the standard Zipf form. This modification 
provides a better fit for real datasets.

Given its close association with semantic structure, the original Zipf's law has been applied to characterize statistical patterns in music datasets, particularly those of Western music. In Western music, numerous studies have investigated both the original Zipf's law and its Zipf–Mandelbrot extension. For example, Voss and Clarke \cite{voss1978noise} found that musical signals exhibit $1/f$ noise characteristics, with pitch fluctuations following an inverse power-law distribution. Here $f$ is the temporal frequency of fluctuations in the musical signals such as pitch or amplitude changes. This finding positions music between pure randomness and strict periodicity, producing a balance between predictability (order) and surprise (randomness) that is perceptually pleasing. The power-law slope $s \approx 1$ ensures that energy is distributed across scales, giving a fractal-like structure, which implies that music is neither fully random nor fully repetitive. Manaris et al. \cite{manaris2002zipfs} applied the Zipf–Mandelbrot law to a large corpus of 220 MIDI-encoded works spanning Baroque, Classical, Romantic, 12-tone, jazz, rock, and algorithmically generated music, showing that these metrics capture aesthetic properties computationally. 
The metrics used in this work include pitch, duration, melodic intervals, and harmonic consonance. These findings indicate that metrics following a Zipfian distribution give rise to music with aesthetic significance. This implies a connection between Zipf’s law and the musical semantics that listeners intuitively use to understand and enjoy music. 

The application of Zipf’s law to music is not straightforward, as it requires defining the semantic elements or Zipfian units analogous to words in natural language. However, in music, the identification of such semantic elements is not well established, and various studies have tested different choices. Perotti and Billoni \cite{perotti2020emergence} found that Zipfian distributions appear  when both chords and notes are considered as Zipfian units, reinforcing the analogy between music and language. 
Zanette \cite{zanette2004zipfs} extended the concept of linguistic context to music and showed that the statistics of note usage follow Simon’s model, a generative mechanism underlying Zipf’s law. Further refinements have also been proposed. 
Here Simon’s model \cite{simon1955class} is a stochastic model for generating distributions that follow Zipf’s law. It was originally proposed to explain word frequency distributions in natural language, but it has broader applications. The model generates easily a Zipfian distribution with a simple stochastic process. In \cite{zanette2004zipfs}, musical notes or events are treated as elements in a sequence, and Simon’s model is applied to explain how musical contexts and note usage can lead to Zipf-like statistics; the probability of repeating a note is proportional to how often it has appeared in the musical sequence, leading to a power-law distribution in note frequencies. This implies that music like natural language may exhibit ``semantic context", where some notes or motifs are repeated more often in certain contexts. 
Serra-Peralta et al. \cite{serra2021heaps} analyzed harmonic codewords from classical music using the Kunstderfuge corpus of classical music\footnote{https://www.kunstderfuge.com/}.
They used Heap's law, $V(N) \sim K N^\beta$ where $V(N)$ is the number of distinct elements (or the vocabulary size), $N$ is the total number of elements in a dataset, $K >0$ is a constant and $\beta, 0<\beta<1$ is the exponent. Then they showed the mathematical connection between the vocabulary size and Zipf's law, summarized by $\beta \approx \frac{1}{s}$. That is, the authors took distinct harmonic codewords as musical vocabulary and showed such codewords satisfy Zipf's law. 
Haro et al. \cite{haro2012zipf} 
analyzed timbral codewords extracted from 740 hours of speech, music, and environmental sounds and showed that Zipfian distributions are ubiquitous in auditory signals, extending beyond natural language to music and environmental sounds. This result support the idea that Zipf’s law is a general feature of structured symbolic sequences, including musical timbres, not only word sequences in language.

Despite the extensive research on Zipf’s law in music, its application has primarily focused on Western music. As discussed above, Zipf’s law yields a significant implication that the construction of music may rely on underlying musical semantics, providing a statistical framework that listeners intuitively use to understand and appreciate compositions. Empirical evidence from previous studies supports this idea, making it highly relevant for musicological and aesthetic analyses. However, such evidence is largely limited to Western music, and it still remains unclear whether similar patterns hold for other musical traditions, particularly in datasets where formal music theory is less established or underrepresented. Unlike Western music, which has well-developed theoretical frameworks such as harmony and counterpoint, many ethnomusic traditions lack such formalized systems.

In this context, we analyze a large-scale dataset of Korean traditional court music,  {\it Jeong-ak} to investigate its statistical distributions and assess whether it also follows Zipf’s law. Unlike Western compositions, Korean traditional music was not composed by a formal unified music theory.  Instead, each piece appears to follow its own compositional principles. This makes statistical analysis of Korean music more challenging, as the structure is less codified and more variable compared to Western music.

As explained above, it is important to define the semantic elements, or Zipfian units, of a dataset to apply Zipf's law. Following previous works such as \cite{voss1978noise, perotti2020emergence, zanette2004zipfs}, we consider pitch, duration, and their paired combinations as the Zipfian units for Korean traditional music. Korean traditional music is essentially heterophonic 
and does not incorporate the concept of harmony. Therefore, in this study, a Zipfian analysis based on harmonic structure is not applicable.

For this study, we analyzed 43 {\it Jeong-ak} music pieces written in the traditional Korean music notation {\it Jeongganbo}, spanning several centuries, with the oldest dating back approximately 600 years. Although a larger corpus of Korean traditional music exists, these pieces were selected because they are preserved in written form, unlike many orally transmitted works. {\it Jeong-ak} music was predominantly performed in the royal court and among the aristocracy, and its documentation in notation enables accurate statistical analysis. As detailed in Section 2, we translated the {\it Jeongganbo} scores into Western staff notation to facilitate compatibility with standard computational music analysis tools, followed by careful preprocessing and normalization.
The KIAS Large-scale Korean Music Research Group converted {\it Jeongganbo} notation into Western staff notation. During this process, the pitch and duration conversions were applied differently according to several musical conventions, so the data were normalized to combine them into a single dataset for analysis\cite{kim2023gugakAI2}. 


Our analysis demonstrates that Korean traditional court music follows the Zipf–Mandelbrot law across all Zipfian units, including pitch, duration, and their combinations. Particularly, the paired unit of pitch and duration provides clear evidence that Korean music follows the same statistical pattern observed in Western music. Moreover, our results suggest that Korean music exhibits an even \textcolor{black}{stronger adherence to the Zipf–Mandelbrot distribution compared to Western music}. This can be explained by the collective and long-term nature of Korean traditional music, which has evolved through group creativity rather than individual composition. 

As noted by Zipf \cite{zipf1935psycho, zipf1949human}, Mandelbrot \cite{mandelbrot1953informational, mandelbrot1965information}, Voss and Clarke \cite{voss1978noise}, Serra-Peralta et al. \cite{serra2021heaps} and others, the presence of Zipf’s law in music suggests that musical sequences are structured around semantic units, enabling listeners to perceive and appreciate them similarly to natural language. The result we obtained that Korean traditional music also follows Zipf’s law indicates a comparable underlying semantic structure. Moreover, its stronger adherence suggests that over time, Korean music has been collectively shaped to be more easily understood by listeners, reflecting a more communal rather than individual compositional process. That is, over centuries, idiosyncratic effects have been smoothed out, leading to forms shaped by collective cultural consensus. Such long-term, large-scale, and self-organizing processes are precisely the conditions under which Zipf-like distributions typically emerge, as in natural language. In contrast, works by a single composer or short-lived musical styles may lack the scale and diversity necessary for stable Zipfian patterns. Corral et al. \cite{corral2015zipf} showed that Zipf’s law is more robust in long, collective texts than in short or single-author samples, emphasizing the stabilizing effect of scale and cultural accumulation. Similarly, Serra-Peralta et al. \cite{serra2021heaps} found that Zipfian scaling in music emerges more clearly in large and diverse corpora rather than isolated works.

We further show that the joint distribution of two independent Zipfian data sets follows the Zipf–Mandelbrot law. This result shows how scaling laws behave under composition and interaction, and provides a more unified perspective on rank-based statistical regularities. This observation also applies to the Korean music data considered in this study. In particular, for the joint distribution of pitch and duration, each marginal data set satisfies at least a piecewise Zipf’s law. Since most data from nature arise as joint distributions of two or more variables, this result may provide insight into why the Zipf–Mandelbrot law frequently emerges in empirical data.

This paper is organized as follows. In Section 2, we describe the dataset used in this study. Specifically, the dataset was derived from large-scale Jeongganbo music scores, first translated into Western staff notation and then subjected to preprocessing and normalization. In this section, we also define the Zipfian units employed in our analysis. In Section 3, we present the Zipf–Mandelbrot law and discuss the implications of its parameters. This section also explains the methods used to estimate these parameters. Section 4 presents a detailed data analysis and demonstrates that the data follow the Zipf–Mandelbrot distribution.
Section 5 provides a theoretical discussion, including results on the Zipf–Mandelbrot law for paired Zipfian data.
Finally, in Section 6, we provide our conclusions and discuss directions for future research.


\section{Dataset and Zipfian Units}
\label{sec:Method}

Korean court music has been passed down in the form of notation in \textit{Jeongganbo}, created by King Sejong(1397-1450).  
In \textit{Jeongganbo}, pitch is written in Chinese characters 
called \textit{yulmyeong}. However, a more fundamental feature of \textit{Jeongganbo} is that it allows us to recognize the duration of the sound by applying it to series of squares.

We analyzed the data of Korean music constructed by the KIAS Large-scale Korean Music Research Group. This data consists of 45 court music pieces from \textit{Jeongganbo}. The selected pieces represent the standard repertoire currently performed at the National Gugak Center in Korea and the \textit{Jeongganbo} was published by the National Gugak Center in 2015-2016. Since \textit{Jeongganbo} is typically published as separate parts rather than a full score, the data were converted into 220 MusicXML files, with each file representing an individual instrumental part. 

All the pieces in this analysis are ensemble works composed generally between the 15th and 19th centuries. The instruments analyzed are typical Korean court instruments \textit{daegeum}, \textit{piri}, \textit{haegeum}, \textit{gayageum}, \textit{geomungo}, \textit{ajaeng} and voice. Among the 45 pieces in this data, only movements 1 to 3 out of the 7 movements of \textit{Yeomillak} are included and \textit{Gagok}  was originally performed as a large suite of 41 pieces, but this data includes only three pieces. We perform this data in our study after removing eight files consisting of two pieces, \textit{Boheoja} and \textit{Haeryeong}, that are influenced by Chinese music. Chinese music has been imported into Korea over an extended period; however, its musical style differs from that of Korean music. Since the 15th century, several pieces have been created in the Chinese musical style. Accordingly, musicologists distinguish these works from original Korean music, given their Chinese stylistic features. On this basis, in this study 43 pieces of 212 musicXML files from the data were used. In addition, in this dataset, \textit{Yeomillak} and \textit{Yangcheong Dodeori} were divided into multiple files; the criteria are given later. The complete list of musical pieces considered in this paper is provided in Table 3.

\begin{figure}[!ht]
    \centering
    \includegraphics[width=0.8\linewidth]{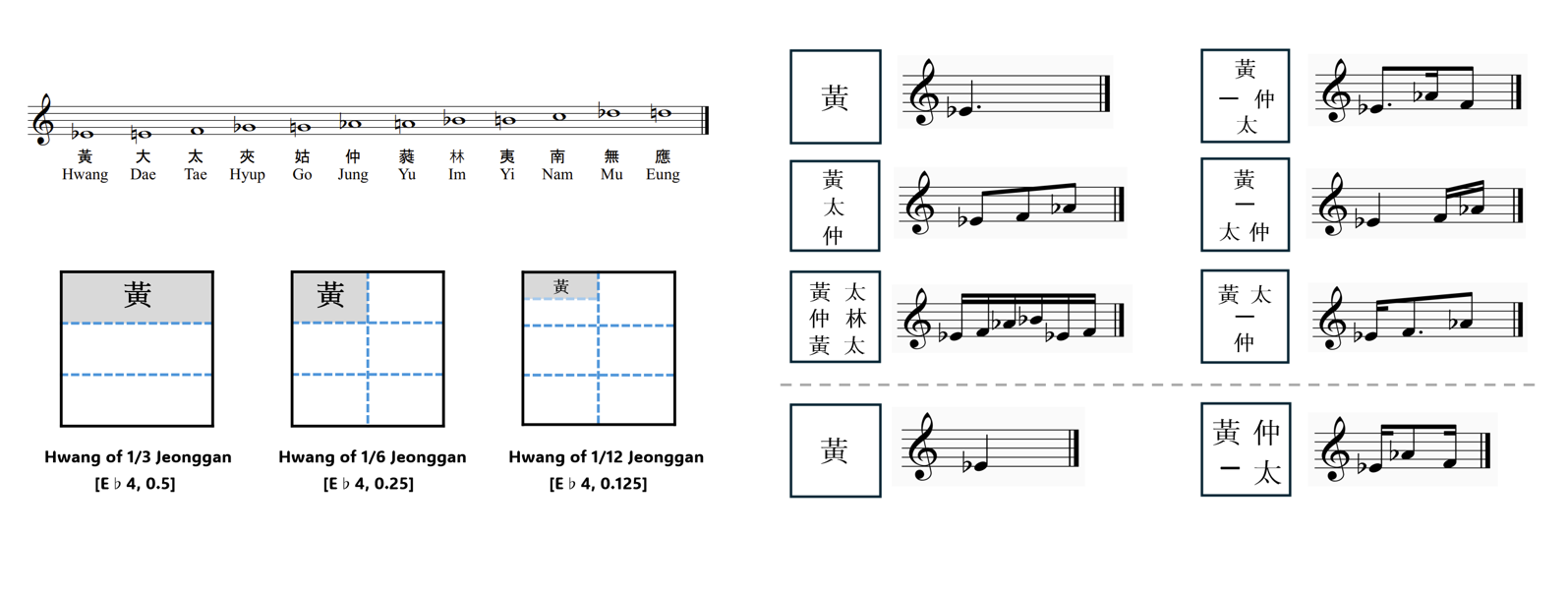}
    \caption{Interpretation of \textit{Jeongganbo} notation. Korean traditional pitch names (\textit{yulmyeong}) are mapped to corresponding Western pitches, with Hwang (黃) corresponding to E$\,\flat$ 4. Duration is represented by the spatial structure of the \textit{Jeonggan}: one square cell indicates one beat, which may be subdivided into two or three smaller units and further subdivided recursively. The placement and grouping of pitch symbols within these divisions specify both pitch and duration, allowing \textit{Jeongganbo} to be interpreted and transcribed into Western staff notation.}
    \label{fig:Interpretation of Jeongganbo 02}
\end{figure}

 The pitch names called \textit{yulmyeong} were invented in ancient China and appear in Korean \textit{Jeongganbo} after the 15th century. 
 There are twelve \textit{yulmyeong} in one octave and these correspond to specific Western notes (see Fig.~\ref{fig:Interpretation of Jeongganbo 02}). In Korean music, the keynote \textit{hwang}(黃) corresponds to  E$\,\flat$ and each of the twelve pitches has its own name. \textit{Jeonggan} is one square cell of the \textit{Jeongganbo} notation.
 The space occupied by the \textit{yulmyeong} in the \textit{Jeonggan} indicates its duration. Thus, the \textit{yulmyeong} was converted into pitch, and the \textit{Jeonggan} area was converted into duration. The Sejong-style \textit{Jeongganbo} was improved into a modern-style \textit{Jeongganbo} in the 20th century and is still used in performances today. The modern-style \textit{Jeongganbo} contains more detailed musical information. A single \textit{Jeonggan} indicates one beat and can be divided into three or two small beats, generally depending on the piece. These divided beats can then be divided into two again repeatedly (see Fig.~\ref{fig:Interpretation of Jeongganbo 02}). In this way, \textit{Jeonggan} clearly indicates pitch and duration and can be converted to staff notation.


Korean traditional music is known for its heterophony, where the melody of each part is considered more important than the vertical aspect. 
In a single piece of music, the fundamental melodic line was shared, but the melody of each part has been created by an independent entity. This process took place over decades or even centuries. Therefore, the melody of each part has a more substantial relevance to the process of creating this music than the vertical relationship between the parts. 
For this reason, we analyzed the data of each part in the piece. 



\subsection{Data Preprocessing and Definition of Zipfian Units}

To examine the applicability of Zipf's law to Korean traditional music, 
\textit{Jeongganbo} notation was first converted into the Western staff notation and stored it in MusicXML format~\cite{good2001musicxml}. 
The converted staffs were then processed using the \texttt{music21} python library~\cite{cuthbert2010music21}, which provides robust tools for parsing, analyzing, and extracting symbolic music information.  

Each MusicXML file was loaded and parsed using the \texttt{music21.converter.parse()} function. From the parsed data, we extracted \textit{pitch} and \textit{duration} information from individual musical events. Non-pitched elements (e.g., rests, slurs, lyrics) were excluded from the analysis. Only valid \texttt{Notes} 
were retained: a \texttt{Note} represents a single pitch event with a specific duration.

Korean music has evolved from simple to delicate melodies. 
This long process of forming a piece played an important role in melody formation; therefore, \textit{Jeongganbo} encodes various ornaments in addition to pitch and duration. However improvisation still applies to this music, and musical elements such as ornamentation may change depending on the performer. Accordingly, \textit{yulmyeong} and duration are the most fundamental elements for melody formation in \textit{Jeongganbo}.
(To examine) the semantic structure of melodies, especially from the perspective of creation, only pitch and duration were considered.


\begin{figure}[!ht]
    \centering
    \includegraphics[width=0.7\linewidth]{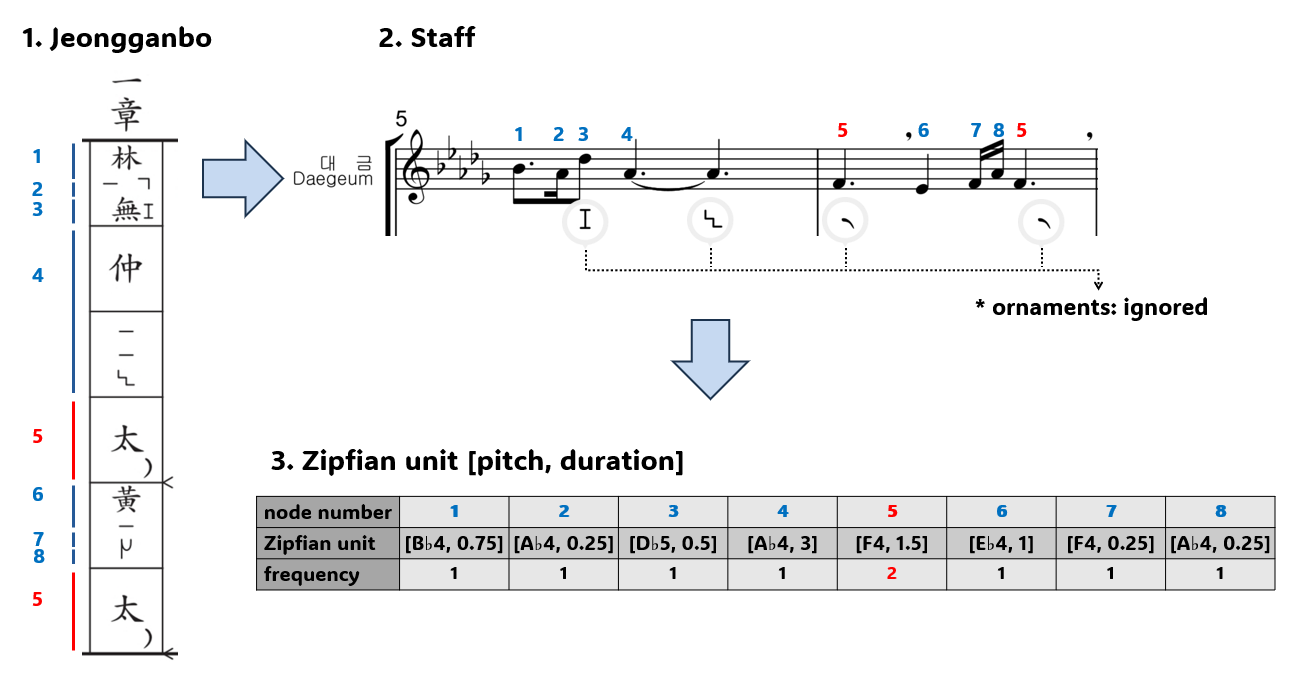}
    \caption{Overview of the process for converting Jeongganbo into Zipfian units. Jeongganbo is first transcribed into Western staff notation, enabling computational analysis of Korean court music using the \texttt{music21} library. Subsequently, \texttt{(pitch, duration)} pairs are extracted as Zipfian units, and their frequencies are computed. Ornamentation symbols are ignored.}
    \label{fig:Zipfian unit_J-Sanghyun_daegeum}
\end{figure}


In this study, we converted traditional \textit{Jeongganbo} pitch names into western note names 
(see Fig.~\ref{fig:Zipfian unit_J-Sanghyun_daegeum}).  
The mapping was determined by referencing common scholarly interpretations and performance practices of Korean court music. For instance, the note \textit{Hwang} (黃) was mapped to E$\,\flat$ 4, \textit{Tae} (太) to F 4, and so on. Fig.~\ref{fig:yulmyeong table& scale} (in Appendix) summarizes the correspondence used in this study.


Each note was represented as a tuple consisting of its note name (e.g., E$\,\flat$ 4, F 4) and rhythmic duration, measured in \textit{Jeonggan} units. We defined these \texttt{(pitch, duration)} tuples as \textit{Zipfian units}, capturing both melodic and rhythmic dimensions of the musical structure. 

To ensure comparability across pieces composed for different instruments, we performed pitch normalization based on the typical pitch ranges of each instrument. Since the dataset included part scores played on a variety of instruments, such as \textit{daegeum}, \textit{piri}, \textit{haegeum}, \textit{gayageum}, \textit{geomungo}, and \textit{ajaeng}, pitch normalization was necessary to align all musical events to a consistent register.

In Korean court music, the concept of a “central range” has long been recognized. Consistent with this, a study of medieval Korean music found that note occurrences were highest within that central octave range~\cite{Hwang1999yonganhoesang}.


Although modern instruments can produce a wider pitch range than their medieval counterparts, musical usage still tends to concentrate on each instrument’s specific register~\cite{Chun2013instrument}.
So we normalized the most frequent pitch range of the instruments. 
Specifically, pitches played on the \textit{daegeum}, which naturally sounds in a higher register, were transposed down by one octave. 
In contrast, pitches from string instruments like the \textit{gayageum}, \textit{geomungo}, and \textit{ajaeng}, which often sound in a  lower register in performance, were transposed up by one octave. 
Instruments such as the \textit{piri} and \textit{haegeum}, which typically operate in the central range, were left unaltered. 
This approach enabled a meaningful comparison of pitch patterns across diverse instrumental settings.



During the conversion of \textit{Jeongganbo} notation into MusicXML format, we first established a base temporal unit by assuming that one \textit{Jeonggan} (i.e., one square cell of the notation) corresponds to 1.5 quarter notes in Western rhythmic units. This working assumption provided an initial reference for mapping durations across pieces.

However, in Korean music, the temporal interpretation of a \textit{Jeonggan} is not fixed and varies depending on the musical context. As described above, a single \textit{Jeonggan} may represent either a unit of three small beats or a unit of two small beats. In court music, the three-beat structure is the most common (see Fig.~\ref{fig:Interpretation of Jeongganbo 02}), but variations occur across musical genres and pieces.

Within the analyzed corpus, \textit{Yangcheongdodeuri} is the only piece that employs both two-beat and partially three-beat definitions of the \textit{Jeonggan} unit. Because these two definitions correspond to different absolute durations, \textit{Yangcheongdodeuri} was divided into two separate files according to the underlying beat structure, ensuring that each file maintained a consistent interpretation.

Another source of temporal variability arises from \textit{Jangdan}, a rhythmic framework defined by a fixed number of \textit{Jeonggan}, for which multiple types exist. \textit{Taryeong-jangdan} is one such type, and in modern \textit{Jeongganbo} notation its duration is conventionally doubled. To recover the intended temporal scale, the durations of the pieces employing \textit{Taryeong-jangdan} were scaled by a factor of $\times$0.5. In contrast, \textit{Yeomillak} and several other pieces employ the same three-beat \textit{Jeonggan} unit, yet are notated on the Western staff as quarter notes with two-thirds of the standard duration. To restore proportional consistency, the durations in these pieces were scaled by a factor of $\times$1.5. These discrepancies indicate that the notated durations reflect conventional practices rather than absolute temporal values.

Therefore, to preserve the internal rhythmic proportionality of each piece while enabling cross-piece comparison, we applied piecewise duration scaling. Specifically, each piece was scaled by a factor of $\times$0.5, $\times$1, or $\times$1.5, depending on the tempo and expressive characteristics of its rhythmic structure. This duration normalization ensured that the temporal structure of each work remained internally coherent while allowing for consistent comparative analysis across the dataset.


The preprocessing pipeline consisted of the following steps:

\begin{itemize}
    \item \textbf{Loading MusicXML Files:} All scores were parsed using \texttt{music21.converter.parse()}.
    
    \item \textbf{Filtering Non-Musical Elements:} Only \texttt{Note} and \texttt{Chord} objects were retained, while rests, ties, and other metadata were excluded.
    
    \item \textbf{Extracting Pitch and Duration:} From each valid note, the \textit{note name} (e.g., C4, D4) and its corresponding duration (in \texttt{quarterLength}) were extracted. In the case of chords, each individual pitch was treated as a separate note. To enable cross-instrument and cross-piece comparison, both pitch and duration values were normalized. Pitches were transposed according to the typical register of each instrument (e.g., \textit{daegeum}: $-1$ octave, \textit{gayageum}, \textit{geomungo}, and \textit{ajaeng}: $+1$ octave, \textit{piri} and \textit{haegeum}: unchanged). For duration, although one \textit{Jeonggan} was generally mapped to 1.5 quarter notes, a piecewise scaling factor ($\times$0.5, $\times$1, or $\times$1.5) was applied based on the rhythmic characteristics (\textit{jangdan}) of each composition.
    
    \item \textbf{Constructing Zipfian Units:} Each \texttt{(pitch, duration)} pair was treated as a unique symbolic unit, and the frequency of each unit across the dataset was computed.
\end{itemize}

These Zipfian units served as the fundamental elements for rank-frequency analysis. By constructing a ranked frequency distribution of the extracted units, we evaluated the extent to which Korean traditional music exhibits Zipfian behavior.

\section{Zipf-Mandelbrot law}\label{subsec:ZMLaw}
In this section, we explain the Zipf-Mandelbrot law and the implications of its parameters, followed by the methodologies we adopt to estimate their values.

\subsection{Zipf-Mandelbrot law and its slope in the log-log plane}
Zipf's law is an empirical statistical principle that describes how the frequency of an element relates to its rank with respect to frequency.
First observed by George Kingsley Zipf in the study of natural language~\cite{zipf2016human}, this law has since been identified in a wide variety of natural and social phenomena.
Mathematically, Zipf's law states that the frequency $f(r), f \in \mathbb{Z}_{>0}$ of the element with rank $r, r \in \mathbb{Z}_{>0}$ follows the relation
\begin{equation}
    f(r) \propto \frac{1}{r^{s}},
\end{equation}
where $s, s \in \mathbb{R}_{>0}$ is the scaling exponent and is often close to $1$ in empirical datasets.
On a log-log plot of frequency versus rank, data following Zipf's law form an straight line with slope $-s$.
In music, the “elements” or ``Zipfian units"  in Zipf's law can correspond to symbolic units such as pitch classes, rhythmic patterns, or combined representations like the \texttt{(pitch, duration)} units.
However, empirical data often deviate from the ideal Zipfian form, particularly for high-frequency elements with low ranks, resulting in curvature in the rank–frequency plot.
To address these deviations, the Zipf–Mandelbrot law introduces a shift parameter $q, q\in\mathbb{R}_{\ge 0}$~\cite{cannon1984fractal}, modifying the formula to
\begin{equation}
    f(r) \propto \frac{1}{(r + q)^{s}}.
\end{equation}
The Zipf–Mandelbrot law reduces to the classical Zipf's law when $q=0$.
In this study, we fit the Zipf–Mandelbrot law to the frequency distributions of normalized Zipfian units derived from Korean music scores.

For Zipf's law, writing $f(r) = A\, r^{-s}, A >0$ yields
    $\displaystyle\frac{d \log f}{d \log r} = -s,    $
so the slope on the log–log plot is the constant as $-s$.
By contrast, for the Zipf--Mandelbrot law $f(r) = A\,(r+q)^{-s}$ we have
    $\displaystyle\frac{d \log f}{d \log r} = 
     -\,s\,\frac{r}{r+q}, $
that is, the local slope depends on the rank $r$, whereas the slope in Zipf’s law is universal. Since the local slope varies with the rank $r$, this model is capable of capturing the flatter behavior observed in the high-frequency, low-rank region compared to the overall distribution. Taking limits clarifies the two regimes:
\begin{align}
\lim_{r \to \infty} \frac{d \log f}{d \log r} &= -s, \quad \text{for } r \gg q\\
\frac{d \log f}{d \log r} &\approx -\,s\,\frac{r}{q} \quad \text{for } r \ll q.
\end{align}
Thus, for large ranks ($r \gg q$) the local slope approaches $-s$, and the Zipf--Mandelbrot model becomes indistinguishable from the classical Zipf law in the tail. 
For small ranks ($r \ll q$), the slope magnitude is reduced to approximately $s\,r/q$, producing a flatter ``head'' and visible curvature in the rank-frequency plot. 
The bend occurs around $r \approx q$, so larger $q$ values yield a more pronounced and extended head curvature, while preserving the Zipfian slope $-s$ asymptotically.

\subsection{Paramter estimates}

Building on the rank–frequency behavior identified in Section 3.1, we now quantify how closely the data distributions are described by the Zipf–Mandelbrot model.
For each Zipfian unit, we estimate the parameters $(q,s)$ for a normalized fit and $(A,q,s)$ for a raw fit by fitting the theoretical rank–frequency curve
\begin{equation}
\label{eq:fit}
\begin{cases}
    f_{q,s}(r) = \dfrac{1}{(r + q)^{s}} & \text{for a normalized fit,}\\
    f_{A, q,s}(r) = \dfrac{A}{(r + q)^{s}} & \text{for a raw fit,}
\end{cases}
\end{equation}
to the observed data.
For the raw fit, we use the rank–frequency curve of the data.
In contrast, we fit the rank–\emph{normalized}-frequency curve, where the frequencies are normalized such that their sum across all ranks equals 1.

For finding the optimal parameter values of each model, we used the function \texttt{scipy.optimize.curve\_fit}, which is a function in the SciPy~\cite{virtanen2020scipy} library in Python used for performing non-linear least squares fitting of a user-defined function to data. 
The function returns optimal parameter values $(q,s)$ and $(A,q,s)$ so that the sum of the squared residuals of $f_{q,s}(r) - \hat{y}$ and $f_{A,q,s}(r) - y$ are minimized where $\hat{y}$ is a normalized frequency and $y$ is a frequency of Zipfian units in the observed data.

We used the Trust Region Reflective algorithm, which is a classical method that guarantees both local and globel convergence.~\cite{branch1999subspace, byrd1988approximate} 
The Trust Region Reflective algorithm is an optimization method used for nonlinear least-squares problems, particularly when there are bounds on the parameters. It works by iteratively defining a ``trust region” around the current estimate -- i.e., a region within which the model is trusted to approximate the objective function well -- and then finding a step that reduces the residuals within this region. After each step, the region may expand or contract depending on how well the model predicted the improvement. In \texttt{scipy.optimize.curve\_fit}, the Trust Region Reflective algorithm is used automatically when bounds are provided for the parameters (via the bounds argument). If no bounds are given, the default algorithm is Levenberg-Marquardt, which does not handle parameter bounds.
For our case, bounds are set to be $0 \le q \le 1000$, $0 \le s \le 20$ to prevent the overflow.

\section{Data Analysis}\label{sec:Data_Analysis}
\subsection{Summary of statistics}\label{subsec:summary-stats}

We begin by summarizing the most frequent units in our dataset (Table~\ref{tab:top-units}) to provide basic descriptive statistics. Throughout this section, the duration of each note is measured in units where the value $1.5$ corresponds to a single \emph{jeonggan}. The total number of units is $65{,}817$, so we report relative frequencies in percentage, rounded to one decimal place. Table~\ref{tab:top-units} summarizes the most frequent pitch, duration, and \texttt{(pitch, duration)} units in our corpus.

\paragraph{Pitch.}
The highest-ranked pitches concentrate in a single octave around a central range: the top seven pitch types all lie in this range, indicating that the pitch distribution is concentrated on a small set of tones within the same octave.
Among them, E$\flat$4 and B$\flat$4 appear most frequently and are typically realized as sustained \emph{keynotes}, whereas F4 occurs frequently as an \emph{intermediary tone} that connects adjacent keynotes.

\paragraph{Duration.}
The duration units are dominated by 1.5, 0.5, and 1.0.
Under our convention, 1.5 corresponds to one \emph{jeonggan}.
Since court music most commonly realizes a \emph{jeonggan} with a three-beat structure (see Fig.~\ref{fig:Interpretation of Jeongganbo 02}), the values 0.5 and 1.0 naturally arise as one- and two-beat subdivisions of a \emph{jeonggan}, which helps explain their high frequencies.

\paragraph{\texttt{(Pitch, duration)}.}
This structure carries over to the joint units: the most frequent \texttt{(pitch, duration)} pairs include (E$\flat$4,\,1.5) and (B$\flat$4,\,1.5), together with (F4,\,0.5).
In particular, the prominence of (F4,\,0.5) is consistent with the tendency for intermediary tones to occur with shorter durations, while keynotes more often appear with longer, sustained durations of a whole \emph{jeonggan}.

\begin{table}[!t]
  \centering
  \caption{Top-ranked pitch, duration, and \texttt{(pitch, duration)} units.
  Ranks are computed independently within each category.}
  \label{tab:top-units}
  \begin{adjustbox}{max width=0.65\linewidth}
  \begin{tabular}{r l r r l r r l r r}
    \toprule
    \multirow{2}{*}{Rank}
      & \multicolumn{3}{c}{Pitch}
      & \multicolumn{3}{c}{Duration}
      & \multicolumn{3}{c}{(Pitch, Duration)} \\
    \cmidrule(lr){2-4} \cmidrule(lr){5-7} \cmidrule(lr){8-10}
      & Unit & Count & Ratio(\%)
      & Unit & Count & Ratio(\%)
      & Unit & Count & Ratio(\%) \\
    \midrule
    1 & E$\flat$4 & 10177 & 15.5 & 1.5  & 14297 & 21.7 & E$\flat$4,\,1.5 & 2645 & 4.0 \\
    2 & B$\flat$4 & 9884  & 15.0 & 0.5  & 14272 & 21.7 & F4,\,0.5        & 2404 & 3.7 \\
    3 & F4        & 9440  & 14.3 & 1.0  & 10351 & 15.7 & B$\flat$4,\,1.5 & 2390 & 3.6 \\
    4 & A$\flat$4 & 8848  & 13.4 & 0.25 & 7706  & 11.7 & A$\flat$4,\,1.5 & 2025 & 3.1 \\
    5 & B$\flat$3 & 5588  & 8.5  & 3.0  & 5985  & 9.1  & B$\flat$4,\,1.0 & 1966 & 3.0 \\
    6 & C4        & 3953  & 6.0  & 0.75 & 3160  & 4.8  & B$\flat$4,\,0.5 & 1965 & 3.0 \\
    7 & C5        & 3024  & 4.6  & 1/12 & 2319  & 3.5  & F4,\,1.5        & 1690 & 2.6 \\
    8 & E$\flat$5 & 2973  & 4.5  & 2.5  & 2231  & 3.4  & E$\flat$4,\,0.5 & 1682 & 2.6 \\
    9 & A$\flat$3 & 2159  & 3.3  & 1/4  & 1775  & 2.7  & A$\flat$4,\,1.0 & 1546 & 2.3 \\
    10& G4        & 2088  & 3.2  & 1/3  & 1391  & 2.1  & A$\flat$4,\,0.5 & 1533 & 2.3 \\
    \bottomrule
  \end{tabular}
  \end{adjustbox}
\end{table}

\begin{table}[!t]
\centering
\small
\caption{Zipf--Mandelbrot fit results at the union and instrument levels.}
\label{tab:zm_fit_summary}

\begin{minipage}{\linewidth}
\centering
\textbf{Panel A: Union-level fits.}\par\smallskip
\begin{tabular}{lcccccccc}
\toprule
Unit & Fit & $A$ & $q$ & $s$ & $R^2$ & $N$ & $bar_{\min}$ & $bar_{\max}$ \\
\midrule
(p,d) & raw  & 66856.82 & 41.46 & 2.64 & 0.9980 & 555 & 18 & 34 \\
(p,d) & norm & 1.00     & 43.88 & 2.76 & 0.9979 & 555 & 18 & 33 \\
(p)   & raw  & 69222.83 & 101.61 & 20.00 & 0.9620 & 30  & 5  & 6 \\
(p)   & norm & 1.00     & 96.90  & 20.00 & 0.9600 & 30  & 5  & 6 \\
(d)   & ---  & ---      & ---    & ---   & ---    & --- & --- & --- \\
\multicolumn{9}{l}{\footnotesize\emph{Duration-only fits did not converge under the same fitting protocol.}}\\
\bottomrule
\end{tabular}
\end{minipage}

\vspace{1em}

\begin{minipage}{\linewidth}
\centering
\textbf{Panel B: Instrument-level fits.}\par\smallskip
\begin{adjustbox}{max width=\linewidth}
\csvreader[
  head to column names,
  separator=comma,
  tabular={
    L{0.18\linewidth} 
    l                 
    S[table-format=6.3]                                        
    S[table-format=3.2]                                        
    S[table-format=2.3]                                        
    S[table-format=1.3,round-mode=places,round-precision=3]    
    S[table-format=3.0,round-mode=places,round-precision=0]    
    S[table-format=2.0,round-mode=places,round-precision=0]    
    S[table-format=6.0,round-mode=places,round-precision=0]    
  },
  table head=\toprule
    Instrument & Fit & {$A$} & {$q$} & {$s$} & {$R^2$} & {Pieces} & {$N$} & {$L$} \\
  \midrule,
  late after last line=\\\bottomrule
]{tables/instrument_results.csv}%
 {instrument=\Instrument,
  fit=\Fit,
  A=\A,
  q=\q,
  s=\S,
  R2=\Rtwo,
  Rmax=\Rmax,
  Songs=\Songs,
  T=\TT}%
 {\Instrument & \Fit & \A & \q & \S & \Rtwo & \Songs & \Rmax & \TT}
\end{adjustbox}
\end{minipage}

\end{table}

\subsection{Experimental settings and overview}\label{subsec:data-analysis-overview}

We primarily analyze Zipf--Mandelbrot fits using \texttt{(pitch, duration)} pairs as Zipfian units, and additionally report pitch-only and duration-only fits as diagnostic comparisons.
We consider both normalized fits by fixing $A=1$ and raw fits by treating $A$ as a free parameter in equation~\ref{eq:fit}.

\begin{itemize}
  \item \textbf{Global union data.} Across all musical pieces, we considered three choices of Zipfian unit: the ordered pair of \texttt{(pitch, duration)}, pitch alone, and duration alone.
  \item \textbf{Instrument-level union data.} For each instrument class (\textit{geomungo, piri, gayageum, daegeum, haegeum, ajaeng and voice}), we constructed an instrument-specific union dataset. At this level we restricted attention to \texttt{(pitch, duration)} pairs, because the sample size within a single instrument is not sufficient to obtain stable Zipfian fits for pitch-only or duration-only units.
  \item \textbf{Piece-level data.} For each individual musical piece (the full list of all pieces are recorded in Table~\ref{tab:song-all}), we again performed Zipf--Mandelbrot fitting using \texttt{(pitch, duration)} pairs as Zipfian units.
\end{itemize}

To quantify how closely a fitted Zipf--Mandelbrot curve matches the observed rank--frequency distribution, we implemented the $R^2$-score.
Let $y_r=\log f(r)$ denote the observed log-frequency at rank $r$ and
$\hat y_r=\log \hat f(r)$ the corresponding fitted value,
for $r=1,\dots,N$. We compute
\[
R^2 \;=\; 1 \;-\;
\frac{\sum_{r=1}^N (y_r-\hat y_r)^2}{\sum_{r=1}^N (y_r-\bar y)^2},
\qquad
\bar y = \frac{1}{N}\sum_{r=1}^N y_r,
\]
which is closely related to Pearson's correlation coefficient and can be interpreted as the fraction of variance in $\{y_r\}$ explained by the fitted curve~\cite{pearson1896vii}.
We use $R^2$ primarily as a descriptive similarity index for comparing many fits across units and scopes, consistent with prior Zipf/Zipf--Mandelbrot studies in music and language~\cite{liu2013statistical,marques2012recognizing}.
While there is no universal threshold for what constitutes a ``good'' fit, values above $0.7$ are often treated as indicative of good agreement, with $0.8$ and $0.9$ reflecting stronger agreement~\cite{manaris2002progress,manaris2003evolutionary,linders2023zipf}.
The main findings are as follows:
\begin{itemize}
    \item \textbf{Global union.} Strong Zipf--Mandelbrot agreement for \texttt{(pitch, duration)} units with $R^2$-score $0.9980$ for a raw fit and $0.9979$ for a normalized fit(Fig.~\ref{fig:union_zm_and_piecewise}(a),(b), Table~\ref{tab:zm_fit_summary}\,(Panel~A)), while pitch-only is driven to parameter bounds and duration-only does not converge (Fig.~\ref{fig:union_zm_and_piecewise}(c),(d)).
    \item \textbf{Instrument-level unions.} Fits remain strong with average $R^2$-score $0.982$ for both raw and normalized fits but exhibit instrument-dependent tail behavior (Fig.~\ref{fig:inst_zm_paired}, Table~\ref{tab:zm_fit_summary}\,(Panel~B)).
    \item \textbf{Piece-level.} Most pieces admit stable fits with average $R^2$-score $0.978$ for a normalized fit(Table~\ref{tab:song-all}).
\end{itemize}

The detailed results are discussed in Sections~\ref{subsec:union-level} -~\ref{subsec:piece-level}.

\paragraph{How to read Tables~\ref{tab:zm_fit_summary}--\ref{tab:song-all}.}
Across these tables, we use a shared notation.
``Raw'' and ``norm'' denote, respectively, the raw Zipf--Mandelbrot fit and the normalized fit with $A=1$.
The fitted parameters are $A, q, s$, and $R^2$ is the $R^2$-score described in Section~\ref{subsec:data-analysis-overview}.
Here $N$ denotes the number of distinct Zipfian units, i.e. the maximum rank, and $L$ denotes the total number of Zipfian units admitting multiplicities in the corresponding dataset.

\subsection{Global union-level fits}\label{subsec:union-level}

For the global union data, the Zipf--Mandelbrot model fits the rank--frequency distribution remarkably well when \texttt{(pitch, duration)} pairs are used as Zipfian units.
Detailed parameter estimates are reported in Table~\ref{tab:zm_fit_summary} (Panel~A), and the corresponding log--log plots are shown in Fig.~\ref{fig:union_zm_and_piecewise}(a)--(b).

Figures are plotted in log--log scale. In each graph, the shaded green band marks the range of ranks where the slope of the fitted Zipf--Mandelbrot curve lies between $-1.2$ and $-0.8$. The corresponding lower and upper rank indices are reported as $\mathrm{bar}_{\min}$ and $\mathrm{bar}_{\max}$ in the table. The existence of such a nearly linear region with slope close to $-1$ indicates that the fitted model successfully captures a generalized Zipf's law.

When pitches or durations alone are used as Zipfian units, the model either fails to converge (duration) or yields a formally high $R^2$ but with parameters driven to the bounds of the optimization (pitch), indicating that the fit is not meaningful.
When \texttt{(pitch, duration)} pairs are used as Zipfian units, both the normalized and raw fits achieve an $R^2$ of about $0.998$. Here, the normalized fit fixes $A=1$ and divides frequencies by the total node count, whereas the raw fit treats $A$ as a free parameter.

\begin{figure}[t]
  \centering

  \subfloat[Union-level: normalized fit]{
    \includegraphics[width=.35\linewidth]{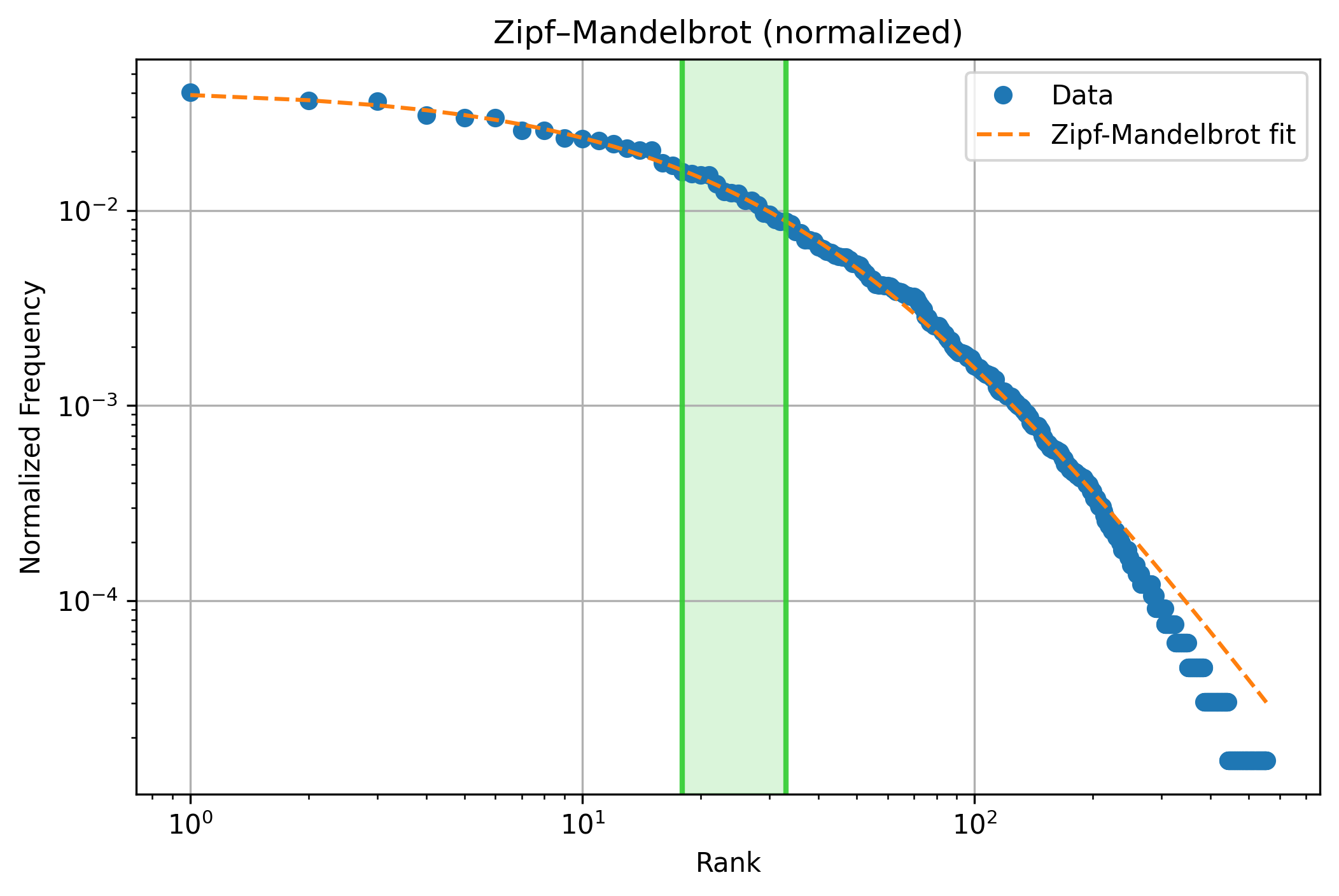}}
  \subfloat[Union-level: raw fit]{
    \includegraphics[width=.35\linewidth]{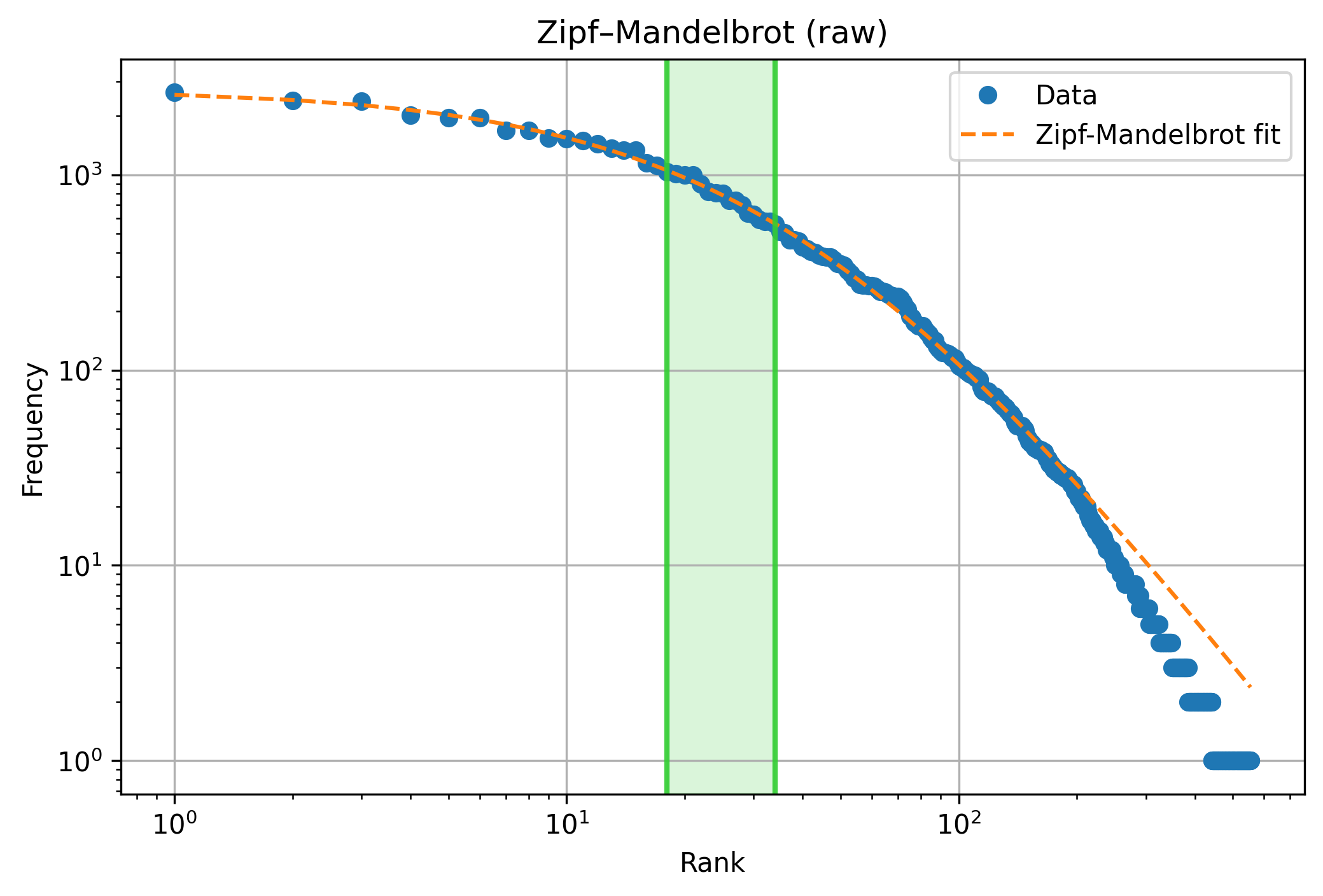}}

  \par\smallskip

  \subfloat[Pitch-only: piecewise linear fit (3 segments)]{
    \includegraphics[width=.35\linewidth]{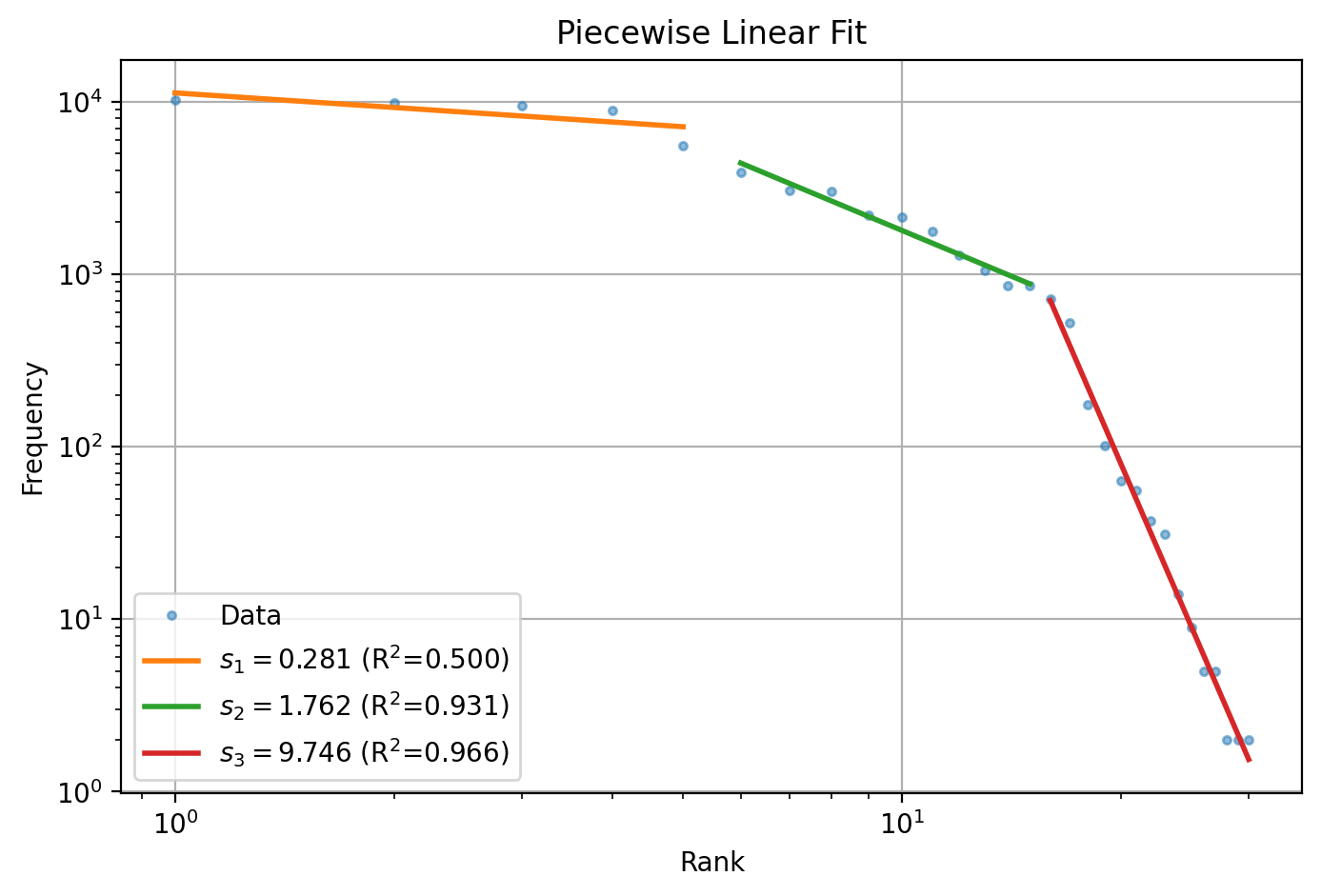}}
  \subfloat[Duration-only: piecewise linear fit (3 segments)]{
    \includegraphics[width=.35\linewidth]{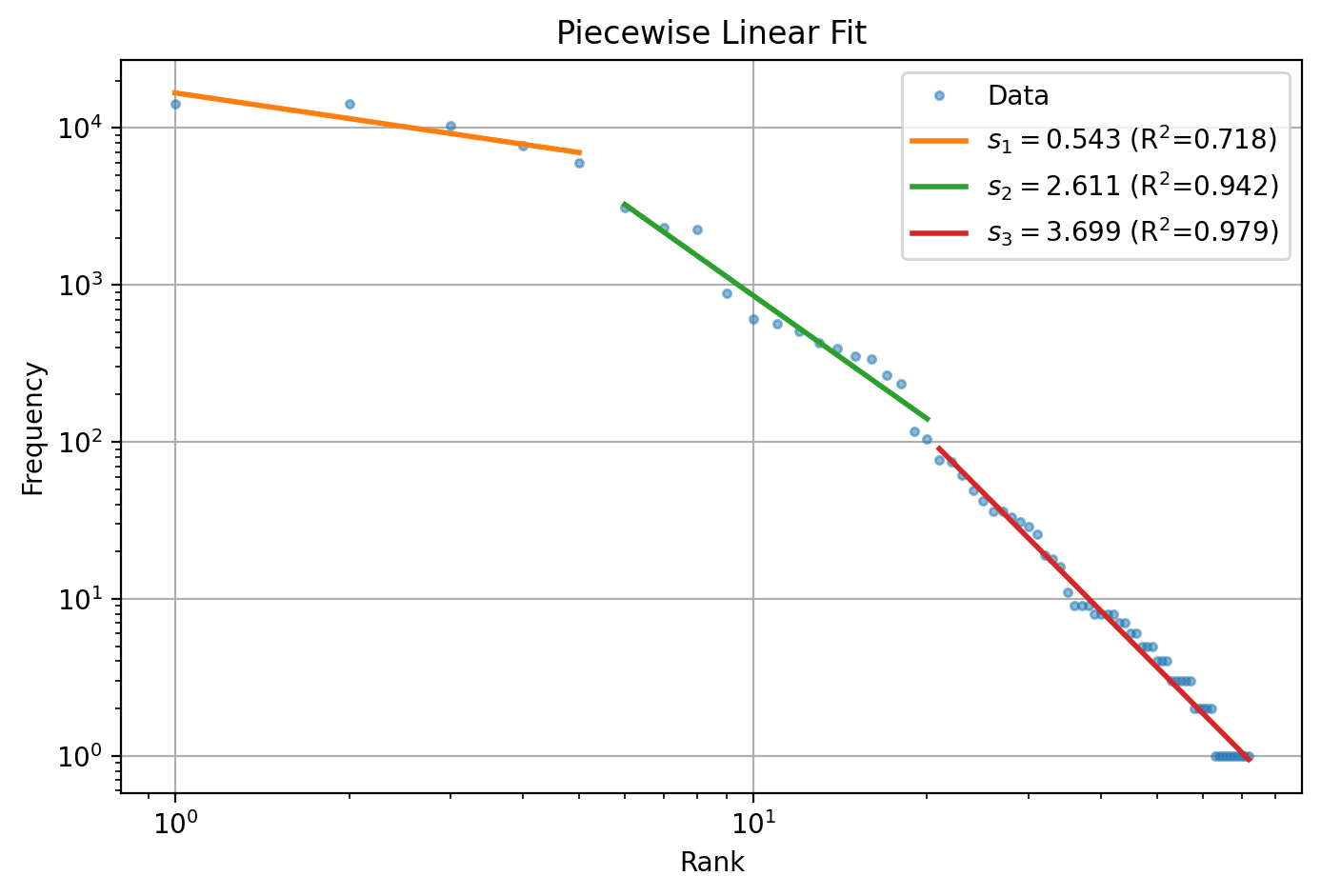}}

  \caption{Union-level Zipf--Mandelbrot fits for \texttt{(pitch, duration)} units and piecewise-linear fits for marginal pitch-only and duration-only rank--frequency curves (log--log scale).}
  \label{fig:union_zm_and_piecewise}
\end{figure}

As we pick \emph{duration} to be the Zipfian unit, the model fails to converge and fails to find the optimal solution.
On the other hand, if we pick \emph{pitch} as the Zipfian unit, the distribution of pitches has a flat head which makes hard to follow Zipf-Mandelbrot law, resulting in $s$ capping to its upper bound.
Although $R^2$ score is high, we deduced that this is not a meaningful fit.
We discuss this in detail in Section~\ref{subsec:flat_head}.

As we consider the \texttt{(pitch, duration)} pair as the Zipfian unit, we find that the model fits to the data well in overall, but it fails to fit the tail successfully, actual data presenting the weaker tail distribution. An interesting fact has been discovered as we fit the data by individual instruments.

\subsection{Instrument-level fits}\label{subsec:instrument-level}

Motivated by the union-level results in Fig.~\ref{fig:union_zm_and_piecewise}(a)--(b), we next stratify the data by instrument.
In the union fit, the agreement is strong in the mid-rank region, but the far tail shows a deviation in which the observed frequencies fall below the fitted Zipf--Mandelbrot curve, i.e., $f(r) < \hat f(r)$ for large ranks $r$.
This suggests that the union distribution may be a mixture of instrument-dependent rank--frequency regimes rather than a single homogeneous mechanism.

For each instrument class, we form an instrument-level union dataset and fit the Zipf--Mandelbrot model using \texttt{(pitch, duration)} pairs as Zipfian units.
Figure~\ref{fig:inst_zm_paired} displays the normalized and raw fits side by side, and Table~\ref{tab:zm_fit_summary}\,(Panel~B) summarizes fitted parameters and sample sizes.
The number of pieces and total observations vary substantially by instrument: for example, \textit{daegeum} contributes the largest volume ($L=19{,}985$ across $46$ pieces), followed by \textit{piri} ($L=14{,}012$, $46$ pieces) and \textit{haegeum} ($L=12{,}972$, $46$ pieces), whereas \textit{voice} is represented by only $3$ pieces ($L=904$), as each musical piece is played with different family of instruments.

Overall, the fits remain strong across instruments, with $R^2$ typically in the range $0.962$ -- $0.997$ (Table~\ref{tab:zm_fit_summary}, Panel~B), indicating that the Zipf--Mandelbrot curve captures the main rank--frequency trend.
However, the tail behavior is clearly instrument-dependent.
Most instruments exhibit close agreement between data and fit throughout the shaded mid-slope region and into the tail.
In contrast, \textit{daegeum} and \textit{ajaeng} display tail deficits: at high ranks, the fitted curve overestimates the empirical frequencies ($f(r) < \hat f(r)$), meaning that rare \texttt{(pitch, duration)} units occur less often than the fitted Zipf--Mandelbrot model predicts.
This same qualitative pattern is visible in the global union fit, suggesting that instrument-level heterogeneity is a plausible source of the union-level tail mismatch.

For \textit{ajaeng}, the overall agreement is comparatively weaker ($R^2 \approx 0.962$), and the empirical curve exhibits a noticeably more flat mid-rank region, where many \texttt{(pitch, duration)} types occur with very similar frequencies.
In musical terms, this corresponds to a distribution that spreads usage more evenly across units, so individual melodic units are less salient and the line tends to function more as a background texture.

For \textit{voice}, the lower $R^2$ ($0.943$) should also be interpreted with caution, as it is based on a small number of pieces and a much smaller sample size.
Nevertheless, the sharp drop immediately after the top few ranks in the head is consistent with the tendency for \textit{voice} parts to rely heavily on a small set of skeletal tones.

Taken together, the instrument-level fits support the interpretation that the global union curve aggregates multiple instrument-specific regimes.
In particular, deviations from the Zipf--Mandelbrot tail can arise in different ways across instruments (mild but high-volume deviations for \textit{daegeum}, and more structural deviations for \textit{ajaeng}).

\begin{figure}[t]
  \centering

  \subfloat[Geomungo (norm)]{
    \includegraphics[width=.23\linewidth]{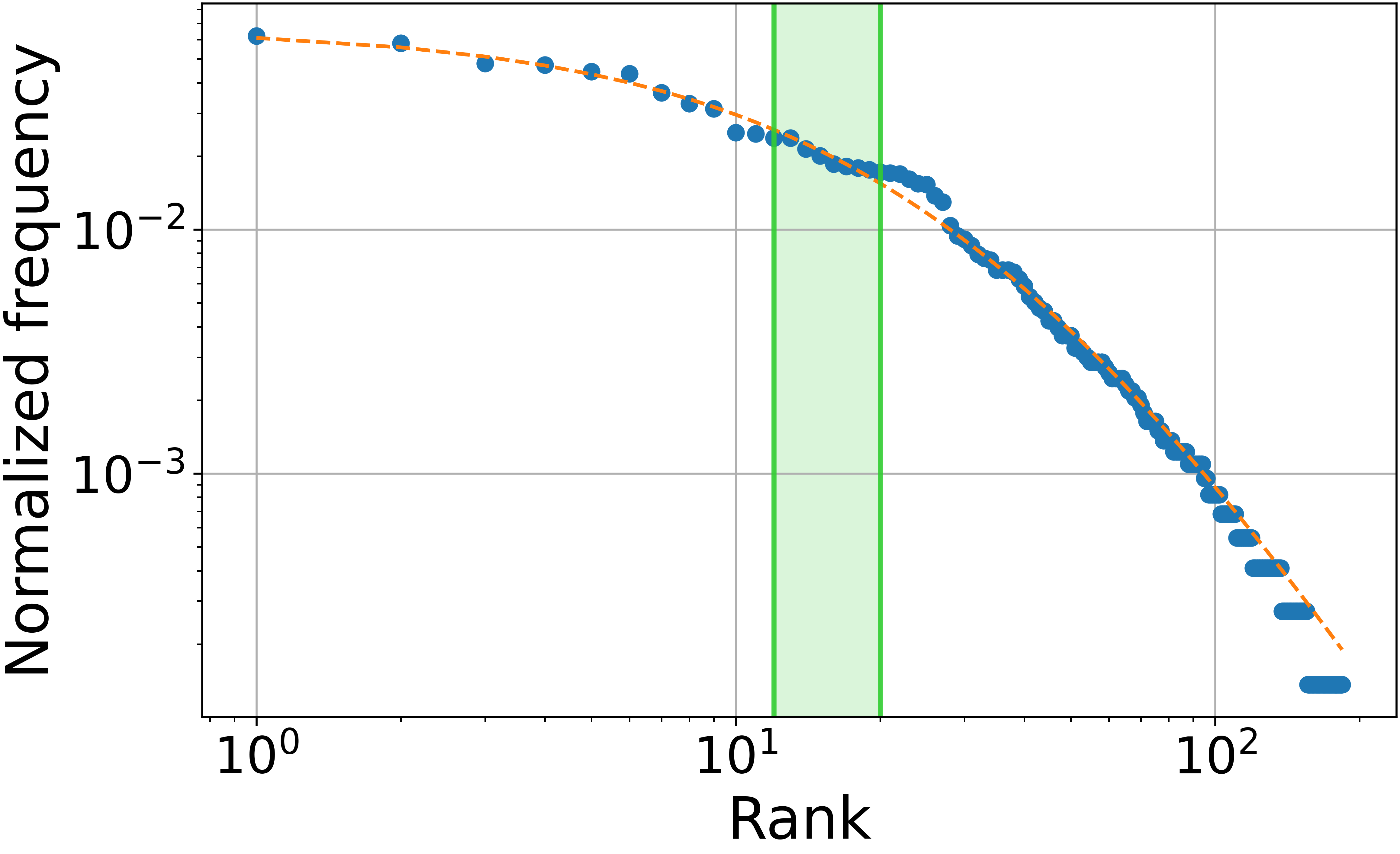}}\hfill
  \subfloat[Geomungo (raw)]{
    \includegraphics[width=.23\linewidth]{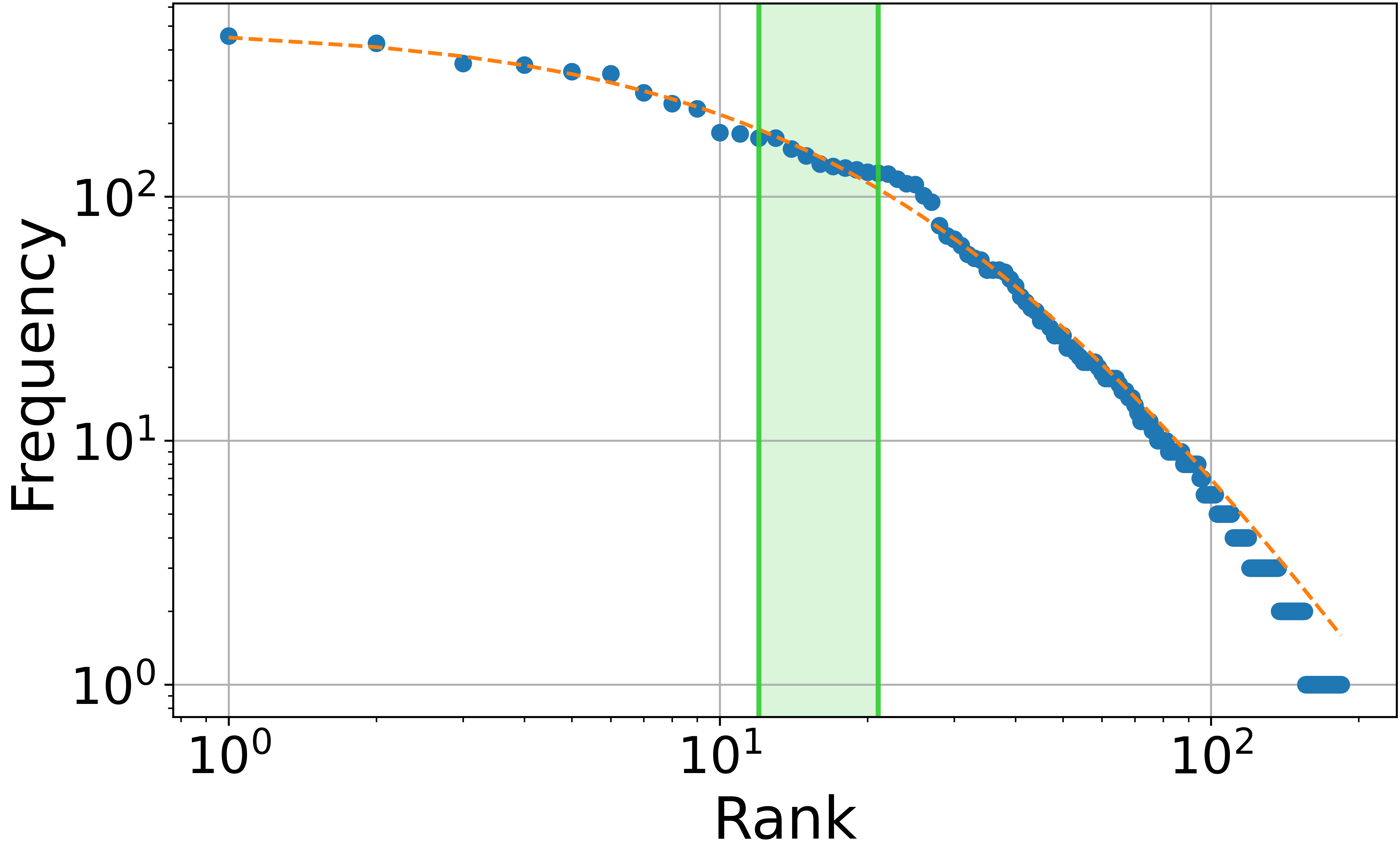}}\hfill
  \subfloat[Piri (norm)]{
    \includegraphics[width=.23\linewidth]{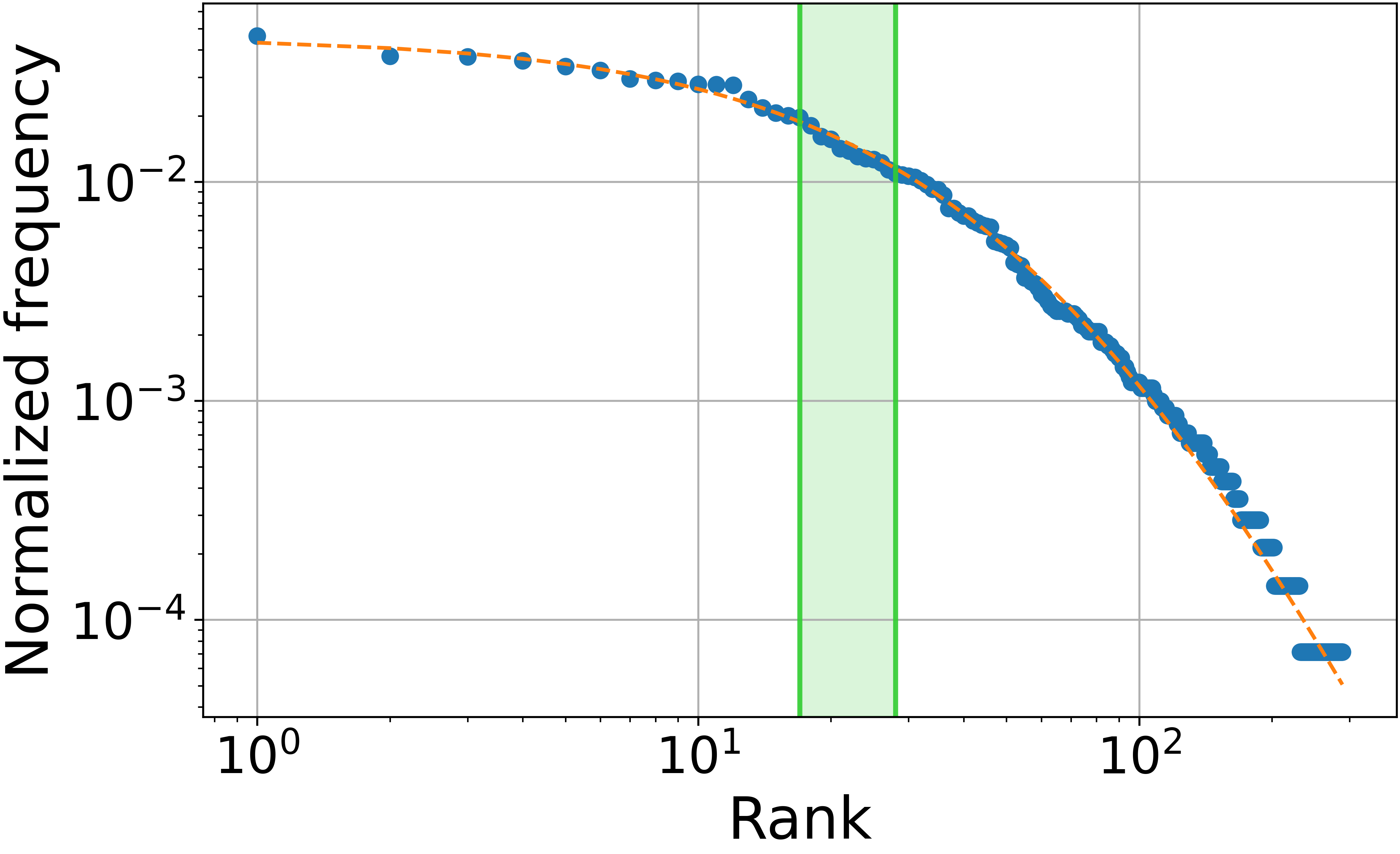}}\hfill
  \subfloat[Piri (raw)]{
    \includegraphics[width=.23\linewidth]{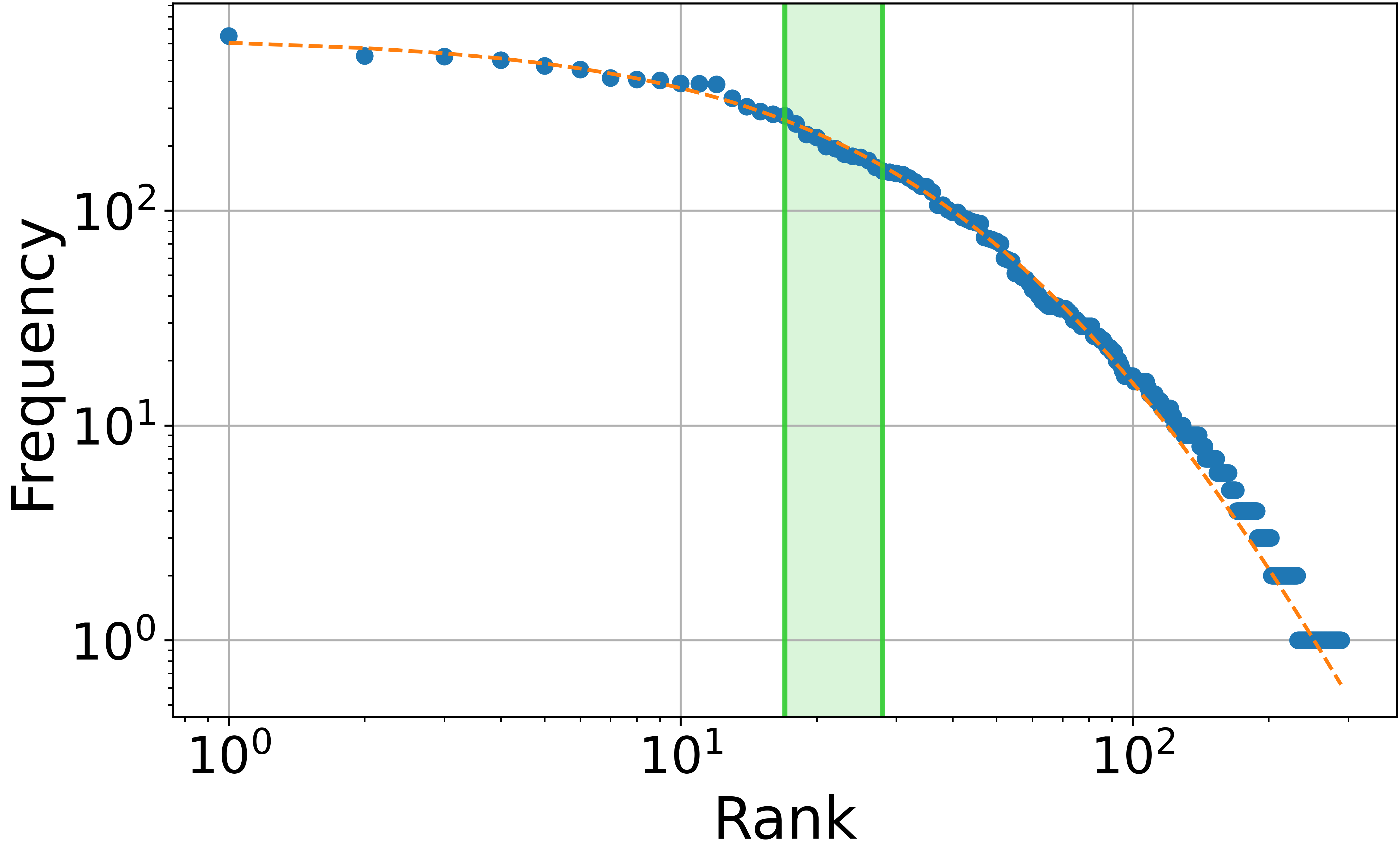}}

  \par\smallskip

  \subfloat[Gayageum (norm)]{
    \includegraphics[width=.23\linewidth]{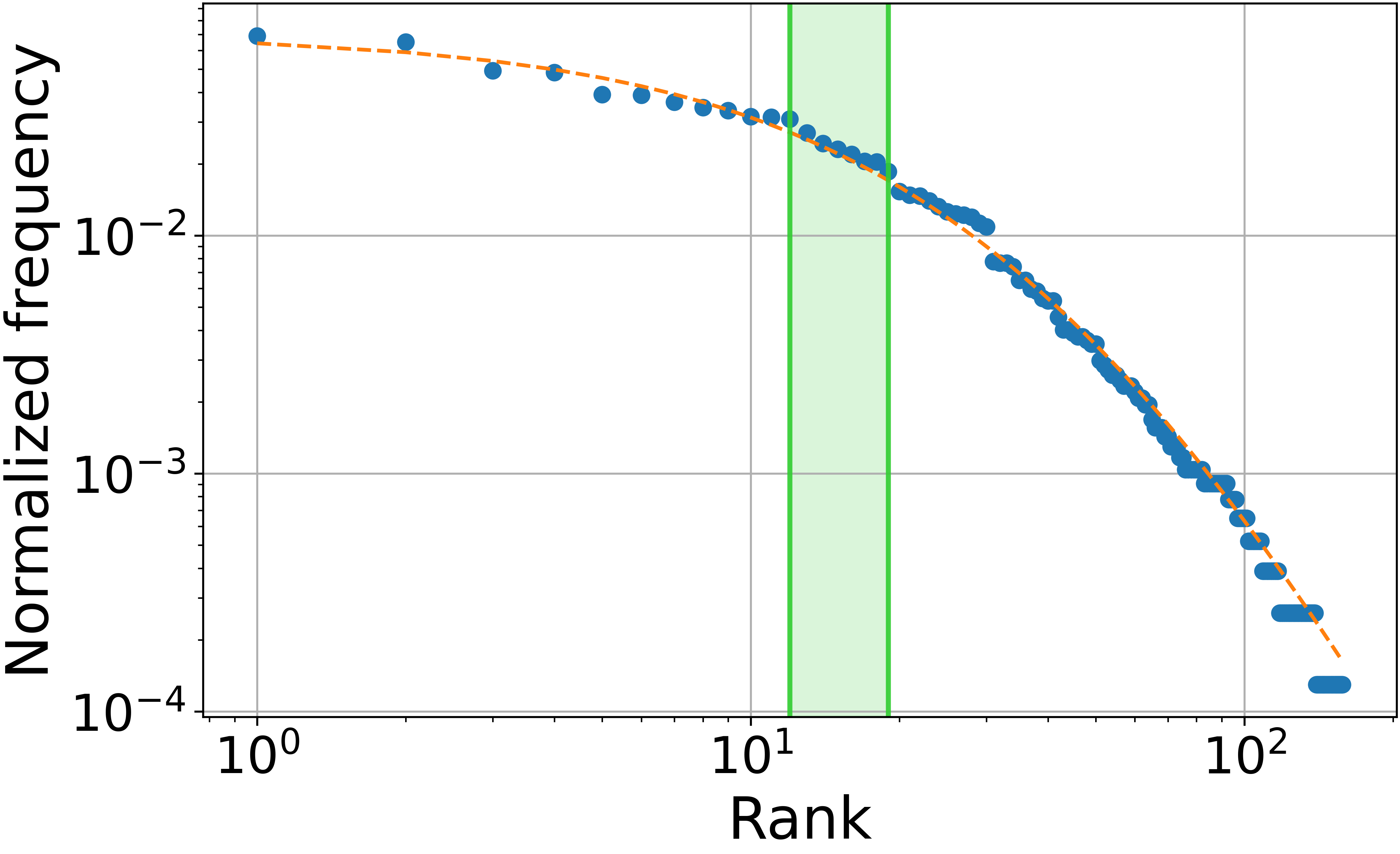}}\hfill
  \subfloat[Gayageum (raw)]{
    \includegraphics[width=.23\linewidth]{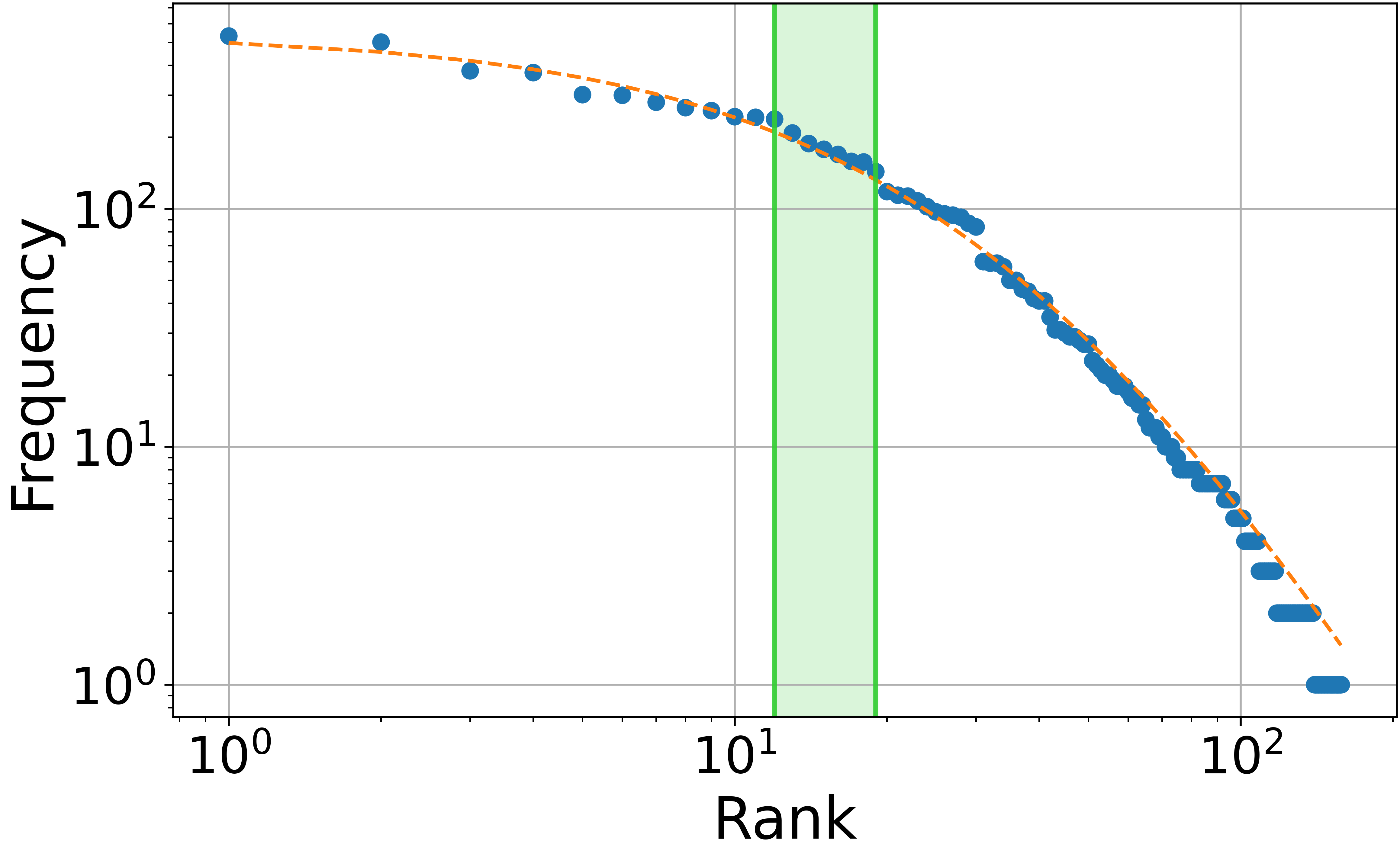}}\hfill
  \subfloat[Voice (norm)]{
    \includegraphics[width=.23\linewidth]{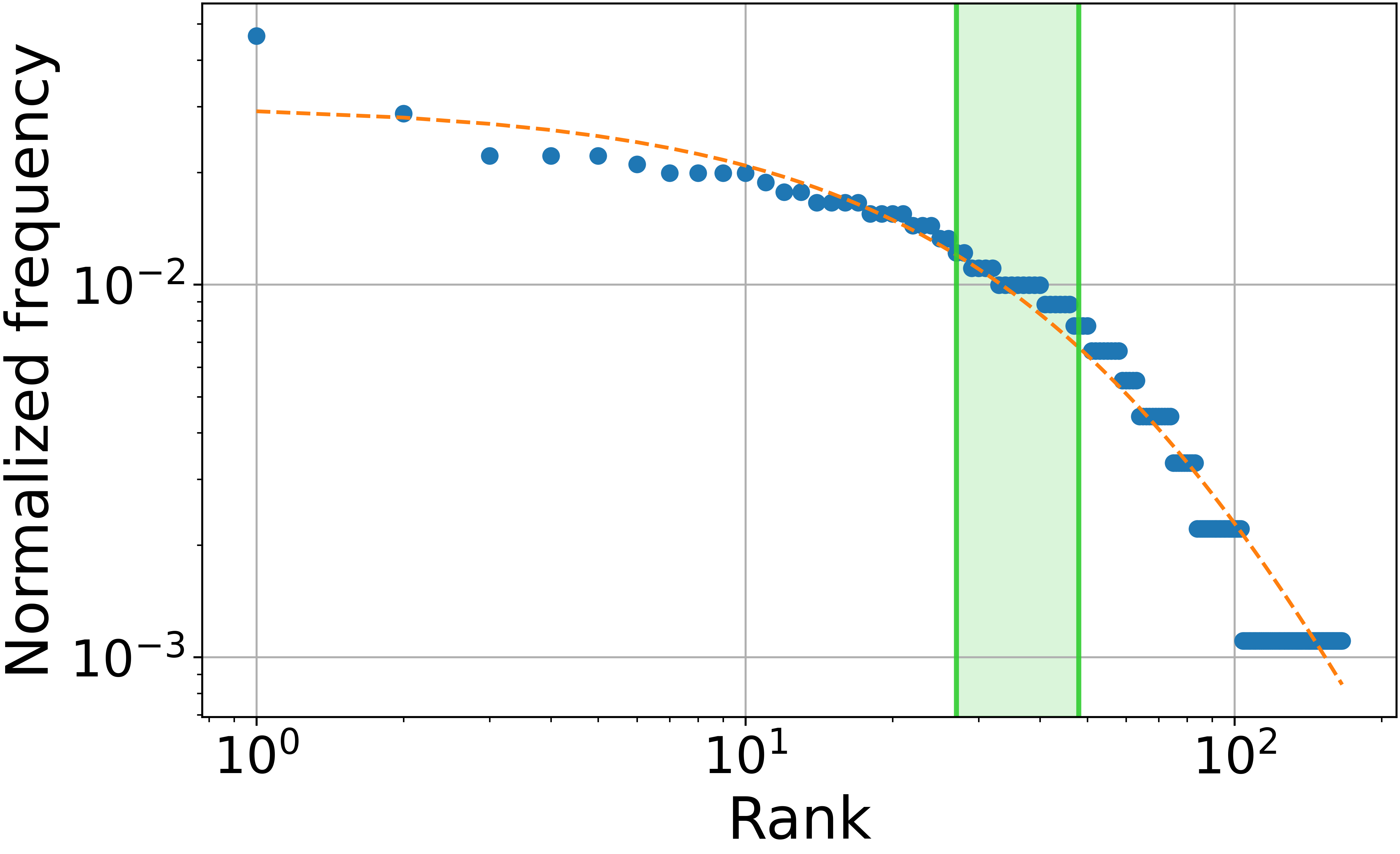}}\hfill
  \subfloat[Voice (raw)]{
    \includegraphics[width=.23\linewidth]{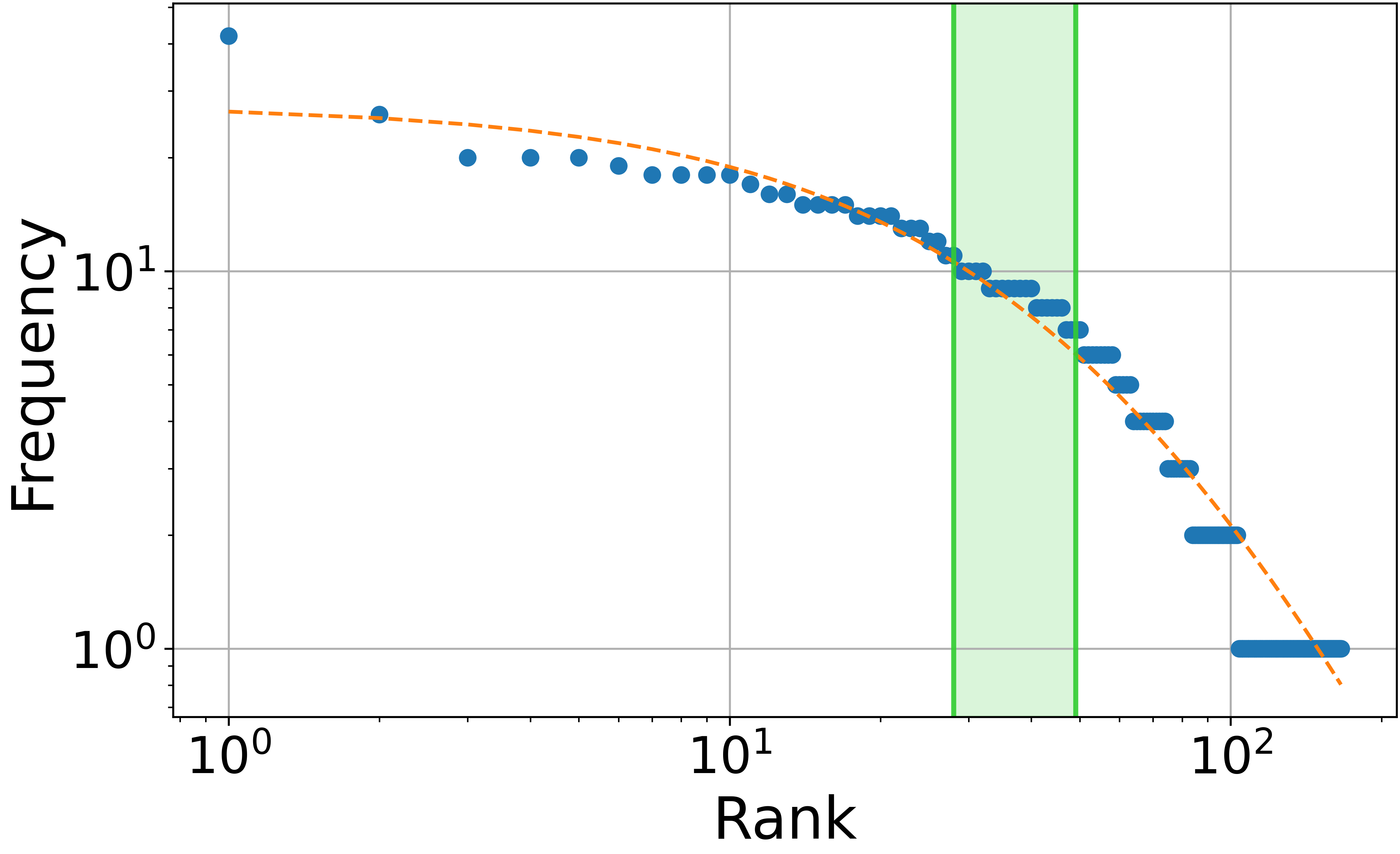}}

  \par\smallskip

  \subfloat[Daegeum (norm)]{
    \includegraphics[width=.23\linewidth]{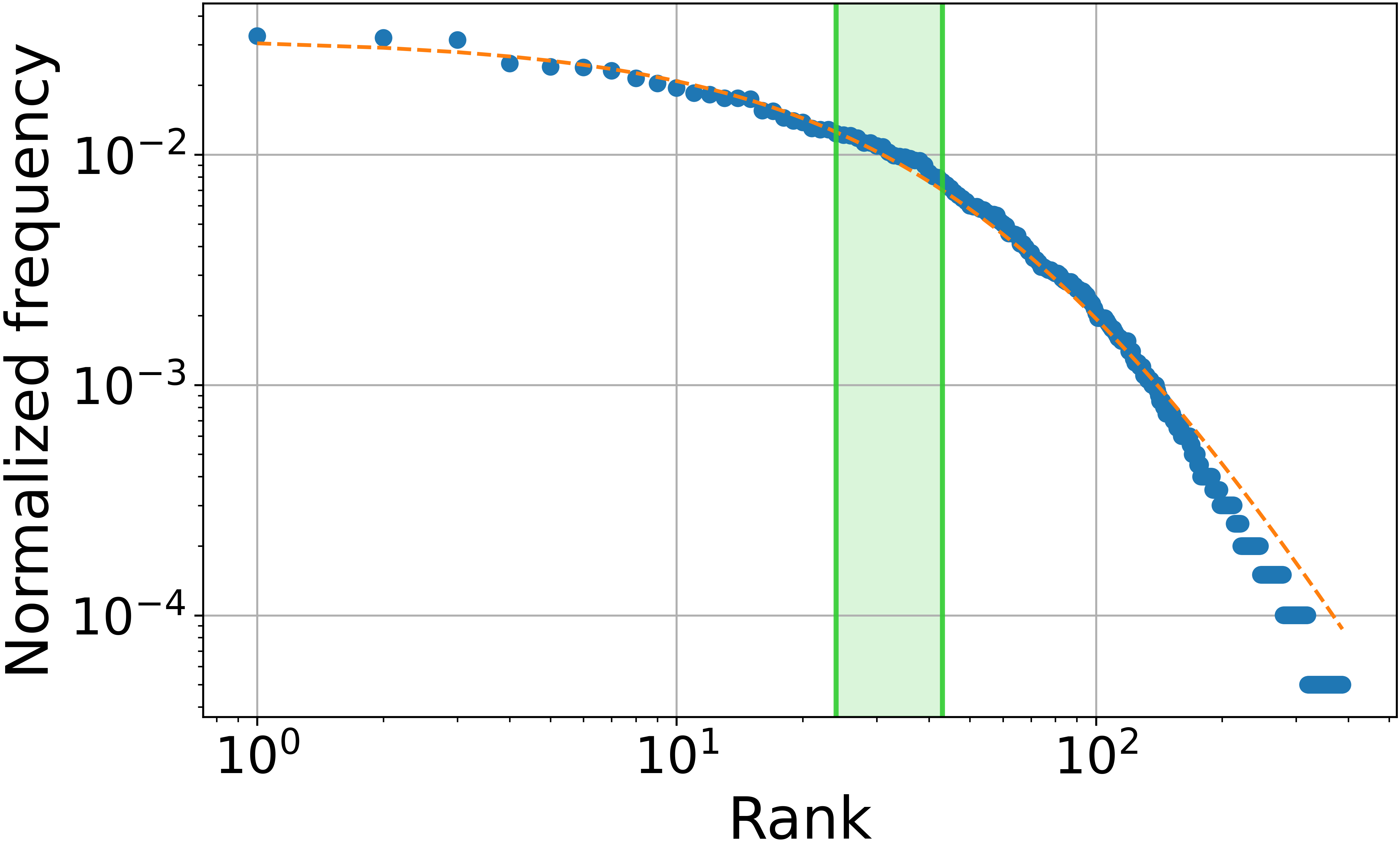}}\hfill
  \subfloat[Daegeum (raw)]{
    \includegraphics[width=.23\linewidth]{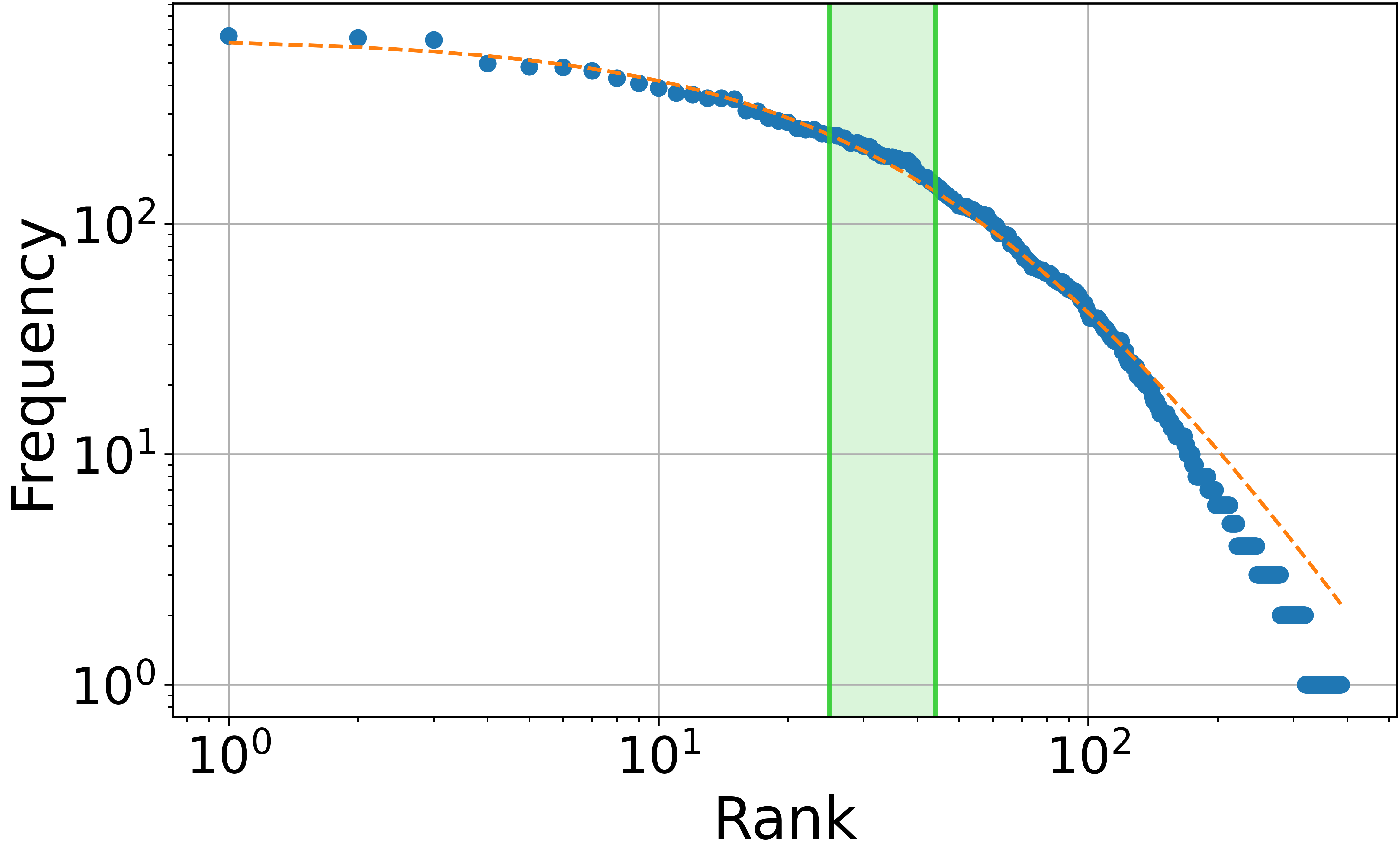}}\hfill
  \subfloat[Haegeum (norm)]{
    \includegraphics[width=.23\linewidth]{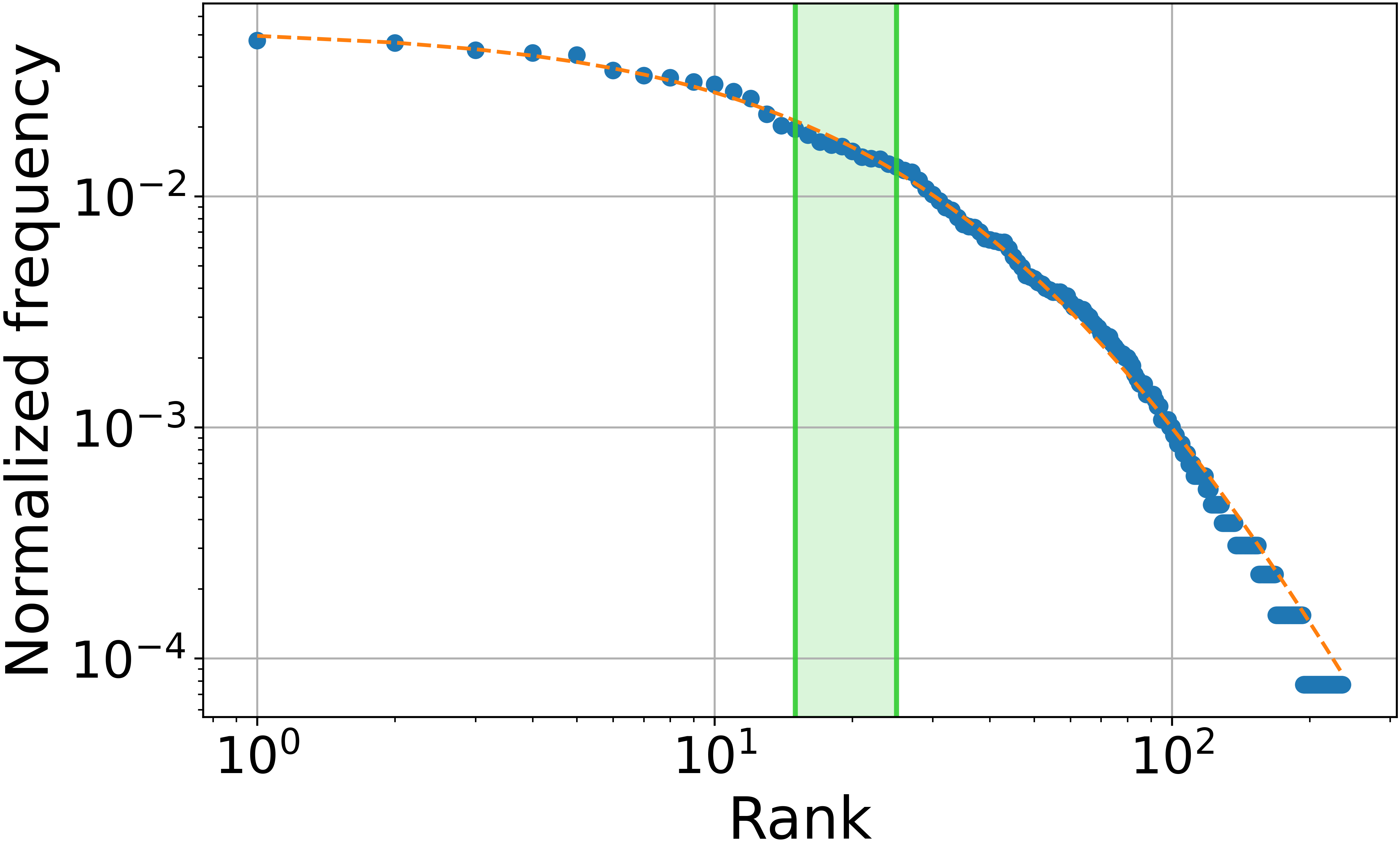}}\hfill
  \subfloat[Haegeum (raw)]{
    \includegraphics[width=.23\linewidth]{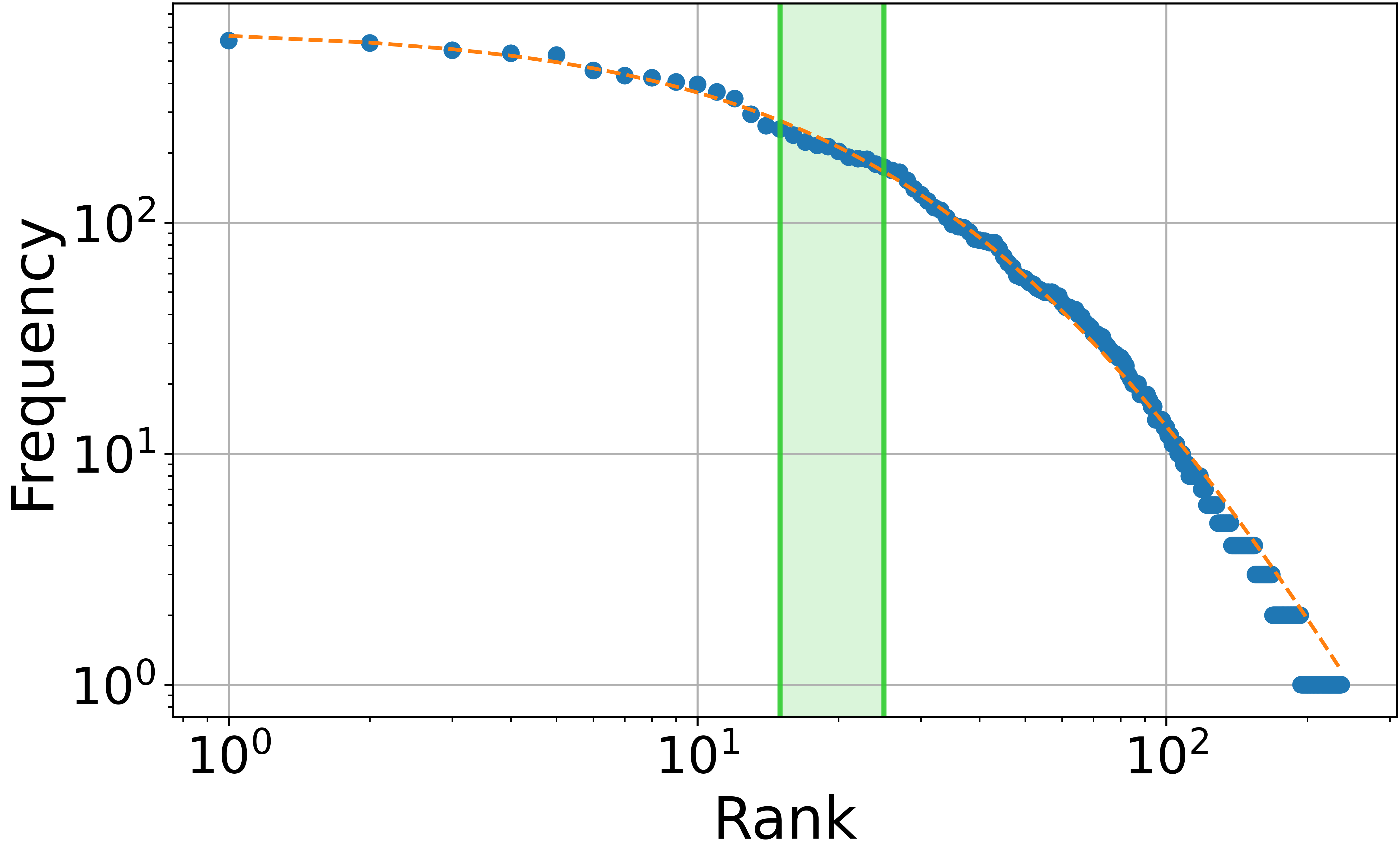}}

  \par\smallskip

  \subfloat[Ajaeng (norm)]{
    \includegraphics[width=.23\linewidth]{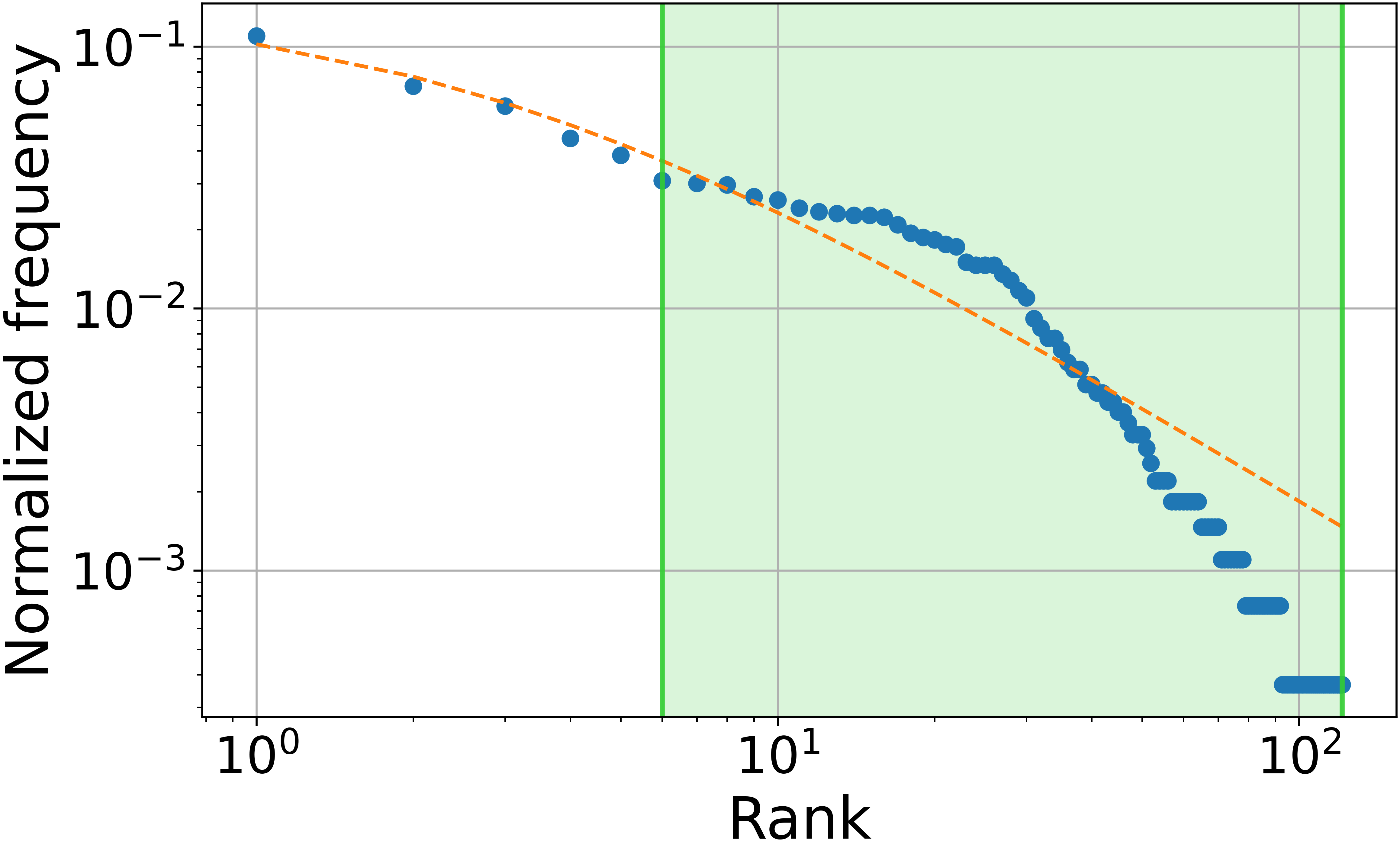}}\hfill
  \subfloat[Ajaeng (raw)]{
    \includegraphics[width=.23\linewidth]{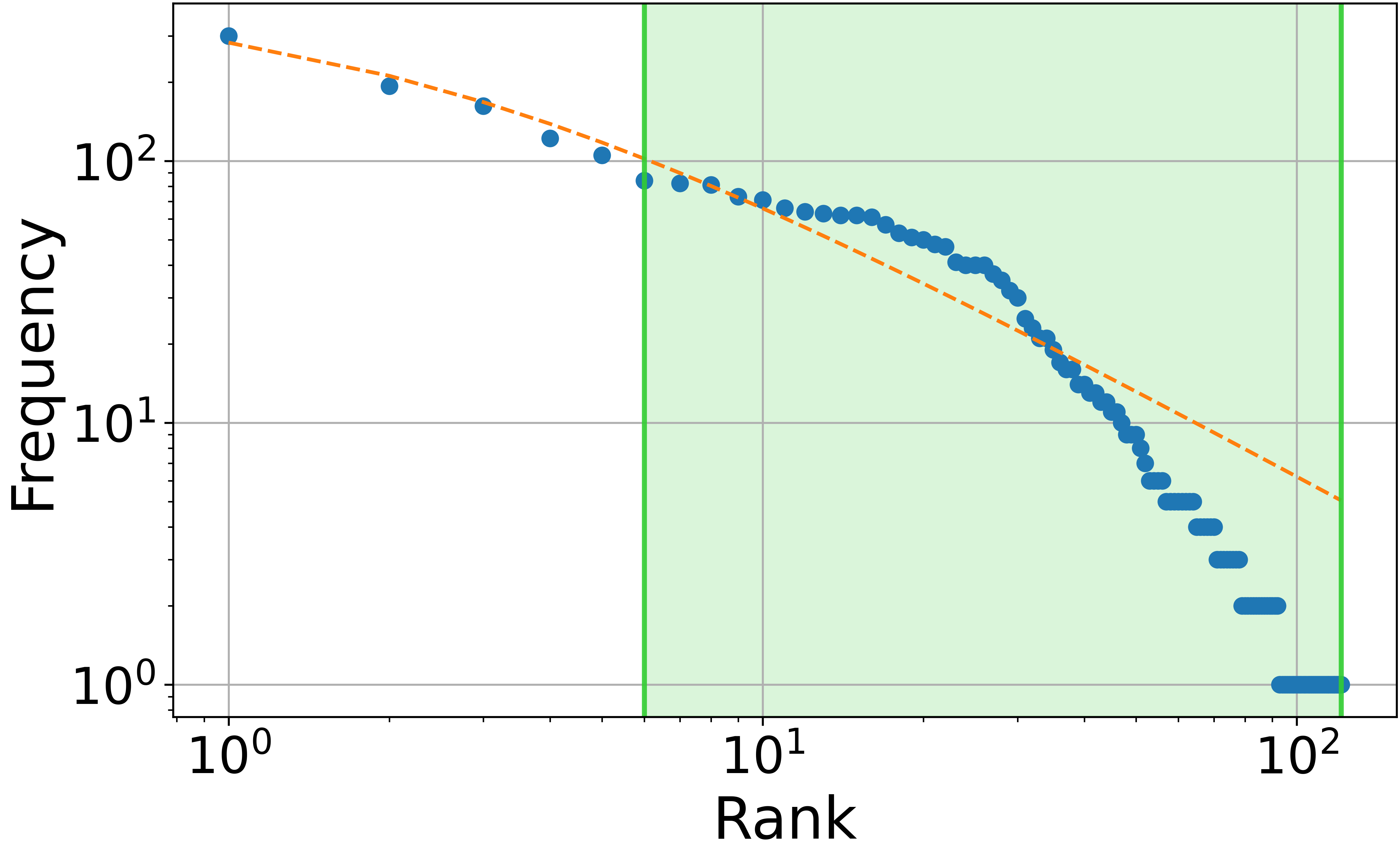}}\hfill
  \begin{minipage}[t]{.49\linewidth}
    \vspace{-50pt} 
    \centering
    \includegraphics[width=.45\linewidth]{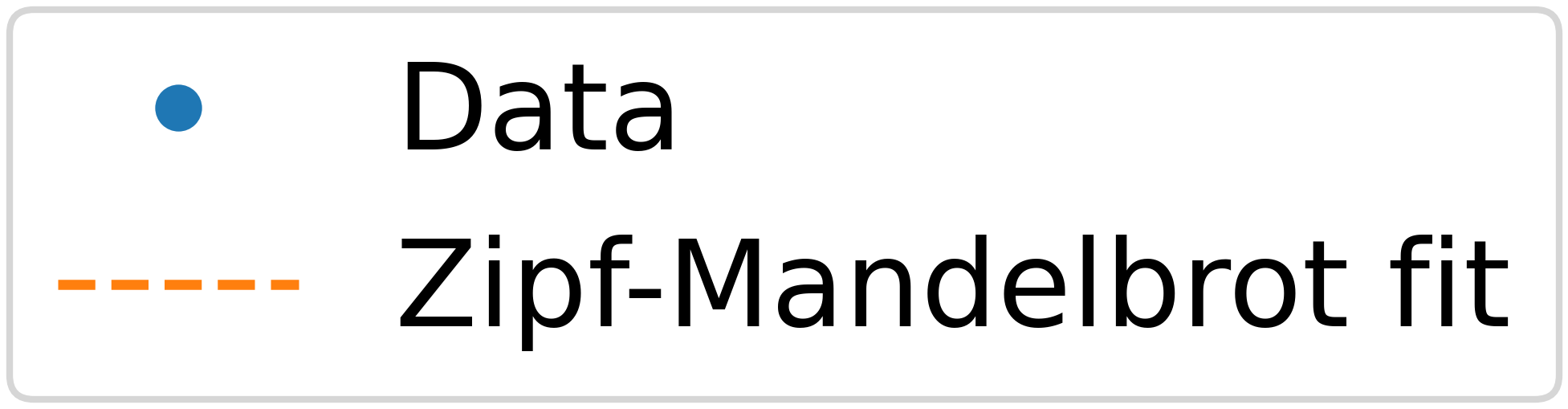}
  \end{minipage}

  \caption{Instrument-level Zipf--Mandelbrot fits for \texttt{(pitch, duration)} units (log--log scale). For each instrument, the normalized fit (left) and raw fit (right) are shown side by side.}
  \label{fig:inst_zm_paired}
\end{figure}

\subsection{Piece-level fits}\label{subsec:piece-level}

We next examine whether the Zipf--Mandelbrot regularity observed at the union level persists at the level of individual musical pieces.
Table~\ref{tab:song-all} reports normalized Zipf--Mandelbrot fits for each piece, using \texttt{(pitch, duration)} pairs as Zipfian units.

Overall, the piece-level agreement remains consistently strong.
Across the $46$ pieces in our corpus, the $R^2$ values range from $0.930$ to $0.995$ (median $0.985$), and all pieces satisfy $R^2\ge 0.9$.
Moreover, the majority of pieces exceed commonly used “strong agreement” thresholds:
$89\%$ have $R^2\ge 0.95$ and $83\%$ have $R^2\ge 0.97$.
These results indicate that the Zipf--Mandelbrot structure is not confined to the global union, but is already present at the level of individual compositions.

At the same time, the fitted parameters $(q,s)$ vary across pieces, reflecting piece-specific differences in the shape of the rank--frequency curve (e.g., how quickly the head transitions to the mid-range and how heavy the tail is).
The number of distinct \texttt{(pitch, duration)} units also varies substantially across pieces ($N$ ranges from $47$ to $213$), as does the total number of units ($L$ ranges from $380$ to $3{,}229$).
We observe that the lowest $R^2$ values tend to occur in shorter pieces.

The uniformity of strong fits across many pieces suggests that the rank--frequency organization is a robust feature of this repertoire rather than a property of the union itself.
One possible interpretation is that the corpus exhibits shared compositional feature and stylistic conventions that are repeatedly realized across pieces.

\begin{table}[t]
  \centering
  \small
  \caption{Piece-level Zipf--Mandelbrot parameters for normalized fits.
  Here ``Instruments'' denotes the number of instrument parts present in the piece.
  Superscripts $^{a}$ and $^{b}$ indicate the duration-normalization scheme used for the piece.
  The family of pieces with $^{a}$ is scaled by a factor of $\times 1.5$, where the family of pieces with $^{b}$ is scaled by a factor of $\times 0.5$.
  Parenthetical tags (G, J, P, Ch) indicate the source collection in our corpus.
  G stands for Gwanakyeongsanhoesang, J stands for Junggwangjigok, P stands for Pyeongjohoesang, and Ch stands for Chwita.}
  \label{tab:song-all}
  \begin{adjustbox}{max width=\linewidth}
    \csvreader[
      head to column names,
      separator=comma,
      tabular = {L{0.42\linewidth}
                 S[table-format=1.0, round-mode=places, round-precision=0]
                 S[table-format=4.3]
                 S[table-format=2.3]
                 S[table-format=4.0, round-mode=places, round-precision=0]
                 S[table-format=6.0, round-mode=places, round-precision=0]
                 S[table-format=1.3]},
      table head=\toprule
        Title & {Instruments} & {$q$} & {$s$} & {$N$} & {$L$} & {$R^2$} \\
        \midrule,
      late after last line=\\\bottomrule
    ]{tables/song_results_tex_sorted.csv}
    {title_tex=\Title, Instruments=\Inst, q=\Q, s=\S, Rmax=\Rmax, T=\TT, R2=\Rtwo}%
    {\Title & \Inst & \Q & \S & \Rmax & \TT & \Rtwo}
  \end{adjustbox}
\end{table}

\section{
Remarks on Head Behavior and Joint Distributions
}\label{sec:Discussion}

\subsection{Why does the flat head cause a fitting failure?}
\label{subsec:flat_head}
Consider the Zipf–Mandelbrot model \(f(r) = A(r + q)^{-s}\), and its local slope on a log-log plot:
\begin{equation}
\label{eq:slop_log}
    \frac{d\log f}{d\log r} \;=\; -\,s\,\frac{r}{r + q}.
\end{equation}
Suppose the empirical head is (nearly) flat over the range \(1 \le r \le r_h\), which can be expressed as
$ \displaystyle \Bigl|\tfrac{d\log f}{d\log r}\Bigr| = s\,\frac{r}{r + q} \le \epsilon. $
Then,
    $ \displaystyle q \;\ge\; \Bigl(\tfrac{s}{\epsilon} - 1\Bigr)\,r, $
for \(1 \le r \le r_h\).

Thus, to maintain a sufficiently flat head across a wide observed range (large \(r_h\)), the shift parameter \(q\) must be large. Geometrically, this shifts the bend point (which occurs around \(r \approx q\)) to the right. According to Eq.~\ref{eq:slop_log}, when \(r \ll q\), the slope approaches \(0\), while when \(r \gg q\), it approaches \(-s\). Therefore, a large \(q\) ensures that the observed region (the portion of the data where observations are available, typically \(1 \le r \le N\)) remains in the \(r \ll q\) regime.

Because real datasets have a finite range \(1 \le r \le N\), taking \(q\) to be very large keeps \(\tfrac{r}{r + q}\) small throughout the observed region, removing the bend point. 
This makes it difficult to identify the transition to the tail slope \(-s\) reliably. 

However, if a bend point exists within the observed range (\(1 \le r \le N\)), the fitted parameter \(s\) tends to increase simultaneously to compensate, since the slope \(\frac{d\log f}{d\log r} = -s\,\frac{r}{r + q}\) must still fit the data curvature. In practice, Table~\ref{tab:zm_fit_summary} (Pannel A) illustrates this behavior when the pitch is used as the Zipfian unit: the fitted value of \(q\) is \(96.90\), while \(N = 30\), implying that the entire observed range lies within the \(r \ll q\) regime. Consequently, the estimated \(s\) reaches its upper bound (\(s = 20\)).

\subsection{Joint distribution of two independent Zipfian units}\label{subsec:joint}

We observed that the rank--frequency distribution of \emph{(pitch, duration)} units is almost identical to that of pairs obtained by independently sampling pitch and duration from their probability distributions derived from our dataset, reporting the $R^2$-score of 0.993. 
While the rank--frequency distribution of \emph{(pitch, duration)} units follows the Zipf-Mandelbrot law, we found that each rank--frequency distribution of \emph{pitch} and \emph{duration} units follows the piecewise Zipf's law with 3 line segments.
From this observation, in this section we verify if the joint distribution of two independent distributions following the Zipf law follows the Zipf-Mandelbrot model or not.
For the convenience, we simplify the piecewise Zipf model into the Zipf model. Let
\[
f_{X_1}(x_1)=x_1^{-t_1},\qquad 
f_{X_2}(x_2)=x_2^{-t_2},\qquad t_1,t_2>0, \]
\[
f_{X_i}: X_i \to \mathbb{R}_{\ge 0}, \qquad X_i = [1, \infty), \qquad i=1,2,
\]
and assume further that \(X_1\) and \(X_2\) are independent. The joint density function is
\[
f(x_1,x_2)=f_{X_1,X_2}(x_1,x_2) = x_1^{-t_1}x_2^{-t_2},\qquad (x_1,x_2)\in[1,\infty)^2.
\]
For a threshold \(f_0\in(0,1]\), define the \emph{upper level set} of $f_0$ as
\[
\Omega(f_0):=\bigl\{(x_1,x_2)\in[1,\infty)^2:~ f(x_1,x_2)\ge f_0 \bigr\},
\]
and its area \( A: (0,1] \to \mathbb{R}_{\ge 0} \) as
\[
A(f_0):=\operatorname{Area}\bigl(\Omega(f_0)\bigr) = \iint_{\Omega(f_0)} dxdy.
\]
Since \(f\) is strictly decreasing in each coordinate, \(A\) is strictly decreasing on \((0,1]\) and hence is invertible. 

Now we want to connect the function $A$ with the rank-frequency distribution of a joint distribution.
For $N_1,N_2\in\mathbb{N}$, define the lattice in $[1,\infty)^2$ by
\[
\mathcal{L}(N_1,N_2)
:=\Bigl\{\Bigl(\frac{m}{N_1},\frac{n}{N_2}\Bigr)\in[1,\infty)^2:\; m,n\in\mathbb{N}\Bigr\}.
\]
As $f$ is strictly decreasing in each coordinate, we can find the $r$-th largest element in the following set (whose existence is guaranteed in the $r \times r$ lattice):
\[
\Big\{f(x_1, x_2) \; \big| \; (x_1, x_2) \in \mathcal{L}(N_1 , N_2 )\Big\}
\]
Define \(f_{*(N_1 , N_2 )} : \frac{1}{N_1 N_2}\mathbb{N} \to \mathbb{R}_{> 0}\) as
\[\frac{r}{N_1 N_2} \longmapsto \text{the } r\text{-th largest value of } \Big\{f(x_1, x_2) \; \big| \; (x_1, x_2) \in \mathcal{L}(N_1 , N_2 )\Big\},\]
where the set $\frac{1}{N}\mathbb{N} \coloneqq \{\frac{n}{N} \;|\; n \in \mathbb{N}\}.$

\begin{remark}\label{rmk:rank-freq}
    If $N_1 = N_2 = 1$, then the function $f_{*(1,1)}$ is exactly the same as rank-frequency plot for the joint distribution.
\end{remark}

For $f_0\in(0,1]$, define the normalized discrete counting functional
\[
A_{N_1,N_2}(f_0)
:=\frac{1}{N_1N_2}\#\bigl(\mathcal{L}(N_1,N_2)\cap\Omega(f_0)\bigr).
\]

Finally, we state the connection between $f_{*(N_1, N_2)}$ and $A^{-1}$. 
The detailed proof is illustrated in Appendix~\ref{sec:prop_proof}.

\begin{proposition}[Sorting converges to the inverse-area curve]\label{prop:sorting_limit}
Fix $r>0$ and $\varepsilon_1,\varepsilon_2>0$. Then there exist $M_1,M_2\in\mathbb{N}$ such that
for every $N_1\ge M_1$ and $N_2\ge M_2$ there exists an integer $c\in\mathbb{N}$
for which
\[
\Bigl|\,r-\frac{c}{N_1N_2}\Bigr|<\varepsilon_2
\qquad\text{and}\qquad
\Bigl|\,A^{-1}(r)-f_{*(N_1,N_2)}\Bigl(\frac{c}{N_1N_2}\Bigr)\Bigr|<\varepsilon_1.
\]
In particular, the curve $\{(r,A^{-1}(r))\;|\;r\ge 0\}$ is the continuous analogue of the
rank--frequency plot obtained by sorting lattice samples by decreasing $f$.
\end{proposition}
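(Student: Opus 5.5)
The plan is to compare the discrete counting functional $A_{N_1,N_2}$ with the area $A$, and then transfer that comparison to the generalized inverses. First we record the properties of $A$. Explicitly, $\Omega(f_0)=\{x_1^{t_1}x_2^{t_2}\le 1/f_0\}\cap[1,\infty)^2$ is a bounded region whose boundary is the graph of a continuous decreasing function $x_2=g_{f_0}(x_1)=(f_0x_1^{t_1})^{-1/t_2}$ over $[1,f_0^{-1/t_1}]$. Hence $A(f_0)=\int_1^{f_0^{-1/t_1}}(g_{f_0}(x_1)-1)\,dx_1$ is finite. It is continuous and strictly decreasing on $(0,1]$, with $A(1)=0$ and $A(f_0)\to\infty$ as $f_0\to0$. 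So $A^{-1}:[0,\infty)\to(0,1]$ is continuous and strictly decreasing, and $A^{-1}(r)$ is well defined for the fixed $r>0$.

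\textbf{Step 1 (Riemann-sum convergence).} For each fixed $f_0\in(0,1]$ we show $A_{N_1,N_2}(f_0)\to A(f_0)$ as $N_1,N_2\to\infty$. The lattice points $(m/N_1,n/N_2)\in\Omega(f_0)$ are the upper-right corners of grid cells of area $1/(N_1N_2)$. Since $\Omega(f_0)$ is a down-set in $[1,\infty)^2$ (as $f$ decreases in each coordinate), each such cell lies essentially inside $\Omega(f_0)$, up to the strips along the left and bottom boundary lines $x_1=1$, $x_2=1$. Conversely, every cell contained in $\Omega(f_0)$ is counted. The discrepancy is therefore bounded by the cells meeting the monotone boundary curve, plus the two boundary strips. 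There are $O(N_1+N_2)$ such cells, so the error is $O(1/N_1+1/N_2)$ times constants depending only on $f_0$ through the bounded extent $f_0^{-1/t_1}$, $f_0^{-1/t_2}$. Because both $A_{N_1,N_2}$ and $A$ are monotone in $f_0$ and $A$ is continuous, a Dini/Pólya-type argument upgrades this to uniform convergence on any compact $[a,1]\subset(0,1]$.

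\textbf{Step 2 (inversion).} Set $f^*=A^{-1}(r)$ and choose $\delta>0$ with $[f^*-\delta,f^*+\delta]\subset(0,1]$ and $|f^*\pm\delta-f^*|<\varepsilon_1$. Strict monotonicity gives a gap $\eta=\min\{A(f^*-\delta)-r,\;r-A(f^*+\delta)\}>0$. Take $M_1,M_2$ so that for $N_i\ge M_i$ we have $|A_{N_1,N_2}-A|<\eta/2$ at these two levels and also $1/(N_1N_2)<\varepsilon_2$.

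Let $c=\#(\mathcal L\cap\Omega(f^*))$, so that $c/(N_1N_2)=A_{N_1,N_2}(f^*)$. Then $f_{*(N_1,N_2)}(c/(N_1N_2))$ is the $c$-th largest lattice value, which is $\ge f^*$. It is also $\le f^*+\delta$, because fewer than $c$ points lie in $\Omega(f^*+\delta)$. To get this strict inequality and to rule out ties we use the gap $\eta$. Adjusting $c$ if necessary to a nearby index, we get $|r-c/(N_1N_2)|\le|r-A(f^*)|+\eta/2+\dots$, which is small once $\eta$ and $\delta$ are chosen against $\varepsilon_2$. Conversely, taking $c=\lceil rN_1N_2\rceil$ directly, so that $|r-c/(N_1N_2)|<\varepsilon_2$ automatically, the counts at the levels $f^*\pm\delta$ bracket $c$. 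This forces the $c$-th largest value into $(f^*-\delta,f^*+\delta]$, and we take this cleaner choice.

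\textbf{Main obstacle.} The main obstacle is Step 1: making the boundary-cell count rigorous and uniform. In particular, one must control the contribution near the asymptotic ends of the level curve, where it approaches the lines $x_1=1$ and $x_2=1$. One must also handle ties among lattice values, which could make the $c$-th largest value jump. Both issues are handled by working with the monotone down-set structure and using only the two bracketing levels $f^*\pm\delta$, rather than a uniform statement over all $f_0$.
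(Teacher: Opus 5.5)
Your proposal is correct and follows essentially the paper's route: you prove pointwise convergence $A_{N_1,N_2}(f_0)\to A(f_0)$ by a boundary-cell count (your elementary staircase bound for the monotone level curve plays the role of the paper's appeal to Kneser's tubular-neighbourhood estimate), and then you bracket the $c$-th largest lattice value between the levels $A^{-1}(r)\pm\delta$ using a positive margin $\eta$. Keep only your final choice $c=\lceil rN_1N_2\rceil$ and drop the abandoned first attempt; integrality of the count at level $f^*-\delta$ already gives at least $c$ points there, and the bracketing actually yields $[f^*-\delta,f^*+\delta)$ rather than $(f^*-\delta,f^*+\delta]$, which is immaterial.
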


In proposition~\ref{prop:sorting_limit}, we've shown that the rank--frequency curve of the joint distribution
$f(x_1,x_2)=x_1^{-t_1}x_2^{-t_2}$ is exactly the function $f_*(r)=A^{-1}(r)$.
To test whether this curve is well-approximated by the Zipf--Mandelbrot law,
it is necessary to compute $A(f_0)$ explicitly, which we do next. Likewise, the detailed proof is written in Appendix~\ref{sec:prop_proof}.

\begin{proposition}[Closed form of the sorting function]\label{prop:sorting}
Fix $t_1,t_2>0$. For $f_0\in(0,1]$, define the superlevel set
\[
\Omega(f_0)\;:=\;\bigl\{(x_1,x_2)\in[1,\infty)^2:\;x_1^{-t_1}x_2^{-t_2}\ge f_0\bigr\},
\]
and let $A(f_0):=\operatorname{Area}\,\Omega(f_0)$. Then:
\begin{enumerate}
\item If $t_1\neq t_2$,
\[
A(f_0)\;=\;1+\frac{t_1\,f_0^{-1/t_1}-t_2\,f_0^{-1/t_2}}{\,t_2-t_1\,}.
\]
\item If $t_1=t_2=t$,
\[
A(f_0)\;=\;f_0^{-1/t}\!\Bigl(\tfrac{1}{t}\ln\tfrac{1}{f_0}-1\Bigr)+1.
\]
\end{enumerate}
Moreover, $A:(0,1]\to[1,\infty)$ is strictly decreasing and hence bijective; the
continuous rank--frequency curve is the inverse
\[
f_*(r)\;=\;A^{-1}(r)\qquad(r\ge 1).
\]
\end{proposition}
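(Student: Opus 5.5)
The plan is to compute the area by slicing $\Omega(f_0)$ in the $x_1$ direction, which reduces everything to a single elementary integral. Writing $L=\ln(1/f_0)\ge 0$, the condition $x_1^{-t_1}x_2^{-t_2}\ge f_0$ is equivalent to $t_1\ln x_1+t_2\ln x_2\le L$. For $(x_1,x_2)\in[1,\infty)^2$ both logarithms are nonnegative. Hence the admissible $x_1$ range is $1\le x_1\le a:=f_0^{-1/t_1}$, and for such $x_1$ the fibre is $1\le x_2\le g(x_1):=f_0^{-1/t_2}x_1^{-t_1/t_2}$. Here $g(x_1)\ge 1$ exactly when $x_1\le a$, with $g(a)=1$. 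Therefore
\[
A(f_0)=\int_1^{a}\bigl(f_0^{-1/t_2}x_1^{-t_1/t_2}-1\bigr)\,dx_1 .
\]
Measurability is no issue, since $\Omega(f_0)$ is closed.

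Next I evaluate this integral in the two cases, which split according to whether the exponent $-t_1/t_2$ equals $-1$.

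\textbf{Case $t_1\neq t_2$.} The antiderivative of $x_1^{-t_1/t_2}$ is $x_1^{1-t_1/t_2}/(1-t_1/t_2)$. The key simplification is
\[
f_0^{-1/t_2}\,a^{1-t_1/t_2}=f_0^{-1/t_1}.
\]
This gives
\[
A(f_0)=\frac{t_2}{t_2-t_1}\bigl(f_0^{-1/t_1}-f_0^{-1/t_2}\bigr)-\bigl(f_0^{-1/t_1}-1\bigr).
\]
Putting the terms over the common denominator $t_2-t_1$ yields the stated formula $1+\bigl(t_1f_0^{-1/t_1}-t_2f_0^{-1/t_2}\bigr)/(t_2-t_1)$.

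\textbf{Case $t_1=t_2=t$.} The integrand is $f_0^{-1/t}x_1^{-1}-1$. Since $\ln a=\tfrac1t\ln\tfrac1{f_0}$, the integral is
\[
f_0^{-1/t}\,\tfrac1t\ln\tfrac1{f_0}-\bigl(f_0^{-1/t}-1\bigr),
\]
which is the stated expression. As a consistency check, letting $t_2\to t_1$ in the first formula (L'Hôpital in $t_2$) should recover the second.

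For monotonicity I differentiate the integral representation with respect to $f_0$ by Leibniz's rule. The boundary term vanishes because the integrand is zero at $x_1=a$. This leaves
\[
A'(f_0)=-\tfrac1{t_2}f_0^{-1/t_2-1}\int_1^{a}x_1^{-t_1/t_2}\,dx_1,
\]
which is strictly negative for $f_0<1$. Alternatively, strict monotonicity follows directly from the strict nesting of the level sets together with $g$ being strictly monotone in $f_0$.

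Continuity of $A$ is clear from the closed forms. For the endpoint behaviour, $A(f_0)\to\infty$ as $f_0\to 0^+$, since the $f_0^{-1/\min(t_1,t_2)}$ term dominates, or in the equal case the $\ln$ term does. At $f_0=1$, $\Omega(1)$ reduces to the single point $(1,1)$, so $A(1)=0$. By the intermediate value theorem, $A$ is a continuous strictly decreasing bijection of $(0,1]$ onto $[0,\infty)$, and hence onto $[1,\infty)$ when restricted to its preimage $(0,A^{-1}(1)]$. This is the sense in which the stated range should be read.

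Finally, the identification $f_*(r)=A^{-1}(r)$ is exactly the content of Proposition~\ref{prop:sorting_limit}, now made explicit.

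The only delicate point I anticipate is bookkeeping: confirming that the fibre bound $g(x_1)\ge 1$ holds precisely on $[1,a]$, and carrying out the exponent cancellation correctly. Neither step involves a real analytic obstacle.
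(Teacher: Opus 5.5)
Your proof is correct and follows the paper's route: the paper also writes $A(f_0)$ as an iterated integral over the region under the boundary graph and evaluates it, only slicing in $x_2$ (integrating $f_0^{-1/t_1}x_2^{-t_2/t_1}-1$ over $[1,f_0^{-1/t_2}]$) instead of $x_1$, and it gets monotonicity from $A'(f_0)<0$ just as you do. Your observation that $A(1)=0$, so that $A$ maps $(0,1]$ bijectively onto $[0,\infty)$ rather than $[1,\infty)$, is a correct fix to the stated codomain, which the paper's proof does not address.
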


With the explicit formula for $A(f_0)$ obtained, the rank--frequency curve $f_*(r)=A^{-1}(r)$ can be computed for any $(t_1,t_2)$.
We now verify whether this theoretically calculated curve is well-approximated by the Zipf--Mandelbrot law by fitting the ZM model to $\{(r,f_*(r))\}$ and reporting the resulting $R^2$ scores.
\paragraph{Verification.}
Given \((t_1,t_2)\), we fit the ZM model to \(\{(r,f_*(r))\}\) using \texttt{scipy.optimize.curve\_fit}
to obtain \((A,q,\alpha)\).
The verification is held for $(t_1 , t_2)$ from $0.5$ to $4$, scaled by $0.5$ each.
We further calculated the $R^2$-score for each fit. The results are recorded in Table~\ref{tab:r2_matrix_joint}.

The above result 
has several important theoretical implications. First, it could explain the stability of heavy-tailed behavior under pairing or joint construction. While Zipf’s law characterizes marginal rank-frequency distributions, many real-world systems involve interactions between two or more Zipfian units. Demonstrating that their joint distribution follows the Zipf-Mandelbrot law suggests that the Zipf-Mandelbrot law may form naturally as a higher-order extension of Zipfian scaling. The result also helps to explain the relationship between the Zipf’s law and the Zipf-Mandelbrot law. The latter is an empirical extension of Zipf’s law with an additional shift parameter. Our result may imply a principled generative way of the emergence of the Zipf-Mandelbrot law. To the best of the authors’ knowledge, no prior study has theoretically examined how the joint distribution of independent Zipfian data leads to the emergence of the Zipf-Mandelbrot law. This topic should be further investigated.

\begin{table}[!ht]
  \centering
  \small
  \caption{$R^2$ scores of Zipf--Mandelbrot fits for the joint-curve model. Rows indicate $s$ and columns indicate $t$.}
  \label{tab:r2_matrix_joint}

  \begin{minipage}{\linewidth}
    \centering
    \textbf{Panel A: Normalized fits.}\par\smallskip
    \begin{adjustbox}{max width=\linewidth}
      \begin{tabular}{r c c c c c c c c}
        \toprule
        $s\backslash t$ & $0.5$ & $1.0$ & $1.5$ & $2.0$ & $2.5$ & $3.0$ & $3.5$ & $4.0$ \\
        \midrule
        0.5 & 0.9990 & 0.9994 & 0.9997 & 0.9998 & 0.9999 & 0.9999 & 0.9999 & 1.0000 \\
        1   & \textemdash & 0.9986 & 0.9989 & 0.9993 & 0.9995 & 0.9997 & 0.9998 & 0.9998 \\
        1.5 & \textemdash & \textemdash & 0.9990 & 0.9993 & 0.9995 & 0.9996 & 0.9997 & 0.9998 \\
        2   & \textemdash & \textemdash & \textemdash & 0.9995 & 0.9996 & 0.9998 & 0.9998 & 0.9999 \\
        2.5 & \textemdash & \textemdash & \textemdash & \textemdash & 0.9998 & 0.9999 & 0.9999 & 0.9999 \\
        3   & \textemdash & \textemdash & \textemdash & \textemdash & \textemdash & 0.9999 & 0.9999 & 1.0000 \\
        3.5 & \textemdash & \textemdash & \textemdash & \textemdash & \textemdash & \textemdash & 1.0000 & 1.0000 \\
        4   & \textemdash & \textemdash & \textemdash & \textemdash & \textemdash & \textemdash & \textemdash & 1.0000 \\
        \bottomrule
      \end{tabular}
    \end{adjustbox}
  \end{minipage}

  \vspace{1em}

  \begin{minipage}{\linewidth}
    \centering
    \textbf{Panel B: Raw fits.}\par\smallskip
    \begin{adjustbox}{max width=\linewidth}
      \begin{tabular}{r c c c c c c c c}
        \toprule
        $s\backslash t$ & $0.5$ & $1.0$ & $1.5$ & $2.0$ & $2.5$ & $3.0$ & $3.5$ & $4.0$ \\
        \midrule
        0.5 & 0.9630 & 0.9954 & 0.9985 & 0.9992 & 0.9996 & 0.9997 & 0.9998 & 0.9998 \\
        1.0 & \textemdash & 0.9987 & 0.9991 & 0.9995 & 0.9996 & 0.9998 & 0.9998 & 0.9999 \\
        1.5 & \textemdash & \textemdash & 0.9993 & 0.9996 & 0.9997 & 0.9998 & 0.9999 & 0.9999 \\
        2.0 & \textemdash & \textemdash & \textemdash & 0.9997 & 0.9998 & 0.9999 & 0.9999 & 0.9999 \\
        2.5 & \textemdash & \textemdash & \textemdash & \textemdash & 0.9999 & 0.9999 & 1.0000 & 1.0000 \\
        3.0 & \textemdash & \textemdash & \textemdash & \textemdash & \textemdash & 1.0000 & 1.0000 & 1.0000 \\
        3.5 & \textemdash & \textemdash & \textemdash & \textemdash & \textemdash & \textemdash & 1.0000 & 1.0000 \\
        4.0 & \textemdash & \textemdash & \textemdash & \textemdash & \textemdash & \textemdash & \textemdash & 1.0000 \\
        \bottomrule
      \end{tabular}
    \end{adjustbox}
  \end{minipage}

\end{table}

\section{Conclusion}\label{sec:Conclusion}
Zipf’s law and its extended version, the Zipf–Mandelbrot law, are commonly observed in a variety of natural data, including natural language and music. In particular, the Zipfian phenomenon in music suggests a semantic structure related to the fundamental compositional nature of music. This indicates that the aesthetic aspects of music perceived by humans may arise from underlying scaling patterns in the data. Indeed, various studies have demonstrated the strong emergence of the Zipf–Mandelbrot law in music data across different time periods and genres, particularly in Western music.

This paper presents two important findings. First, we analyzed Korean traditional music written in the traditional notation system Jeongganbo known as {\it Jeong-ak}, court music. We found that Korean traditional music significantly follows the Zipf–Mandelbrot law, using pitch, duration, and their paired combinations as Zipfian units. For each unit, a clear Zipf–Mandelbrot scaling behavior was observed.
Moreover, this scaling appears to be stronger than that observed in Western music. One possible explanation is that Korean traditional music has been shaped over centuries not only by individual composers but also collectively within society. As a result, the compositional structure may have evolved in a way that is broadly acceptable to members of the community, leading to more pronounced law-consistent patterns.
Our findings provide further evidence that music data exhibit Zipfian properties across cultures and historical periods.

This paper also shows that the joint distribution of two independent Zipfian data sets results in a distribution that follows the Zipf–Mandelbrot law. This may provide an explanation for why the Zipf–Mandelbrot law emerges as an extension of Zipf’s law, since most natural data arise from the joint distribution of two or more Zipfian units.
In fact, this result also applies to our Korean music data. Specifically, pitch and duration each satisfy a piecewise Zipf’s law, and when combined as paired units, they yield a Zipf–Mandelbrot distribution. To the best of the authors’ knowledge, no prior study has theoretically examined how the joint distribution of independent Zipfian data leads to the emergence of the Zipf–Mandelbrot law.

The present study focuses on {\it Jeong-ak} data, court music that represents one of the major categories of Korean traditional music. Another important category is Korean traditional folk music, which is generally more flexible in style and has not always been formally notated. We plan to further investigate the Zipf–Mandelbrot law in this category as well.
We conjecture that the strong manifestation of the Zipf–Mandelbrot law may stem from the fact that, over centuries, idiosyncratic variations have been gradually smoothed out, resulting in musical forms shaped by collective cultural consensus. We will  explore this hypothesis further in our future work.

Regarding the discussion of the joint distribution, our theoretical analysis is limited to the joint distribution of two independent Zipfian data sets, each characterized by a single Zipfian slope. We will extend this research to combinations of piecewise Zipfian data sets.
Moreover, the current work considers only the joint distribution of two Zipfian data sets. Our future research will examine multiple data sets and investigate a more general framework for the Zipf–Mandelbrot law.

\bibliography{reference_main}



\section*{Acknowledgements}
This work was supported by National Research Foundation (NRF) of Korea under the grant number 2021R1A2C3009648 and also partially by KIAS Transdisciplinary Research Program.






\appendix
{\bf \LARGE{Appendix}}
\section{Notation of \textit{yulmyeong} according to the range}

\begin{figure}[!ht]
    \centering
    \includegraphics[width=0.7\linewidth]{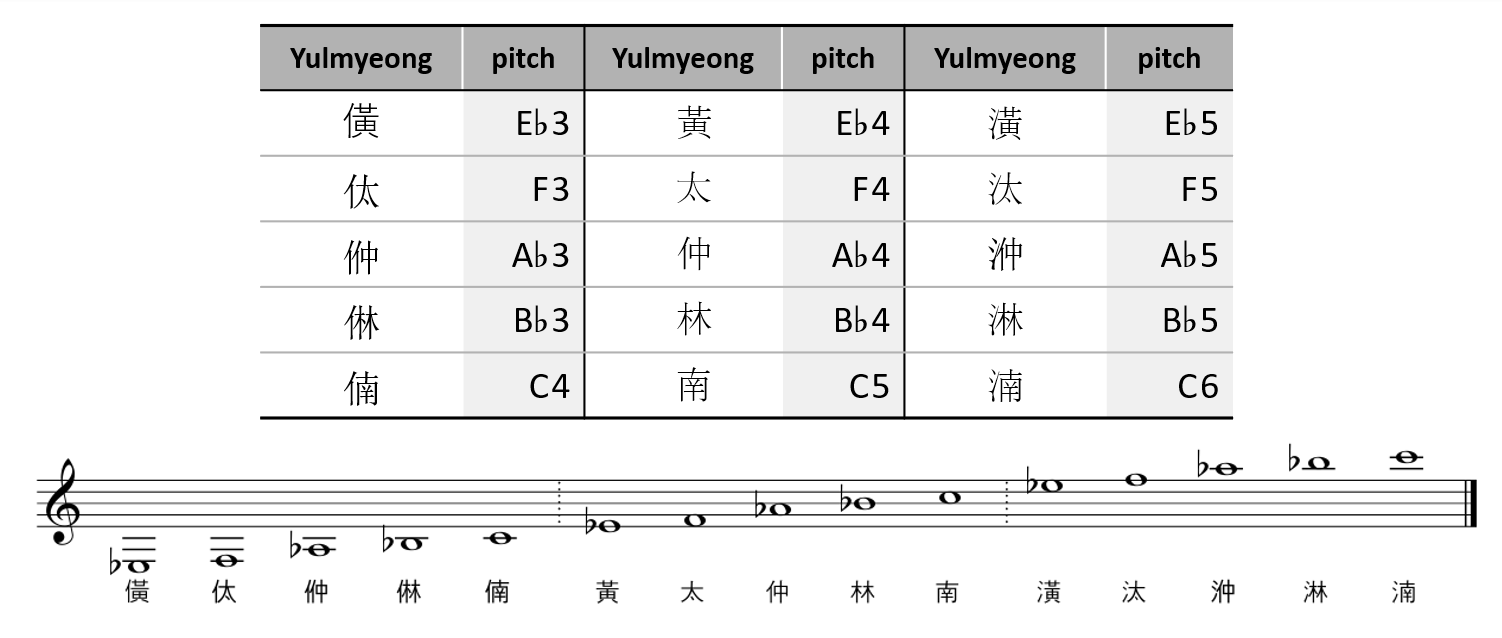}
    \caption{Application of pitch names according to octave}
    \label{fig:yulmyeong table& scale}
\end{figure}

Notation of \textit{yulmyeong} according to the musical range:
Modern-style \textit{jeongganbo} uses a wide range of notes. Therefore, Chinese characters are added to the \textit{yulmyeong} to distinguish the note range. This is commonly used in various instrumental parts.
Each of these distinct note ranges can be converted into Western-style notes.

\section{Mathematical proof of Propositions~\ref{prop:sorting_limit} and~\ref{prop:sorting}}\label{sec:prop_proof}

\begin{proposition}[Sorting converges to the inverse-area curve]
Fix $r>0$ and $\varepsilon_1,\varepsilon_2>0$. Then there exist $M_1,M_2\in\mathbb{N}$ such that
for every $N_1\ge M_1$ and $N_2\ge M_2$ there exists an integer $c\in\mathbb{N}$
for which
\[
\Bigl|\,r-\frac{c}{N_1N_2}\Bigr|<\varepsilon_2
\qquad\text{and}\qquad
\Bigl|\,A^{-1}(r)-f_{*(N_1,N_2)}\Bigl(\frac{c}{N_1N_2}\Bigr)\Bigr|<\varepsilon_1.
\]
In particular, the curve $\{(r,A^{-1}(r))\;|\;r\ge 0\}$ is the continuous analogue of the
rank--frequency plot obtained by sorting lattice samples by decreasing $f$.
\end{proposition}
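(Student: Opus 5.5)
Write $f_0:=A^{-1}(r)$; this needs $r$ to lie in the range of $A$, and by Proposition~\ref{prop:sorting} that range is $[1,\infty)$. The argument is a Riemann-sum approximation of area by lattice counting, combined with the monotonicity of $A$.

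\textbf{Step 1: lattice counts converge to area.} For each fixed level $g\in(0,1]$ we show $A_{N_1,N_2}(g)\to A(g)$ as $N_1,N_2\to\infty$. The set $\Omega(g)$ is the region $\{x_2\le (g x_1^{t_1})^{-1/t_2}\}$ inside $[1,\infty)^2$. It is bounded, since $x_1\le g^{-1/t_1}$ and $x_2\le g^{-1/t_2}$, and it is a down-set in each coordinate. The lattice points in it are therefore the lower-left corners of the grid cells $[m/N_1,(m+1)/N_1)\times[n/N_2,(n+1)/N_2)$ meeting $\Omega(g)$. The symmetric difference between the union of these cells and $\Omega(g)$ lies in a neighbourhood of width $O(1/N_1+1/N_2)$ of the boundary. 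That boundary consists of a monotone graph plus two segments, so it has finite length, and hence
\[
\bigl|A_{N_1,N_2}(g)-A(g)\bigr|\le C(g)\Bigl(\tfrac1{N_1}+\tfrac1{N_2}\Bigr).
\]
Only pointwise convergence at two levels is needed below, so this crude boundary estimate suffices.

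\textbf{Step 2: sandwich the $c$-th sorted value.} By Proposition~\ref{prop:sorting}, $A$ is continuous and strictly decreasing. Hence for $\delta>0$ small we have
\[
A(f_0+\delta)<r<A(f_0-\delta).
\]
Let $\eta$ be the smaller of the two gaps and of $\varepsilon_2$. Choose $M_1,M_2$ so that for $N_i\ge M_i$:
\begin{itemize}
\item Step 1 gives $|A_{N_1,N_2}(f_0\pm\delta)-A(f_0\pm\delta)|<\eta/3$;
\item $1/(N_1N_2)<\eta/3$.
\end{itemize}
Set $c=\lceil rN_1N_2\rceil$, so that $|r-c/(N_1N_2)|<\varepsilon_2$. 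Then
\[
\#\bigl(\mathcal L\cap\Omega(f_0+\delta)\bigr)<c\le\#\bigl(\mathcal L\cap\Omega(f_0-\delta)\bigr).
\]
The number of lattice values $\ge g$ equals $\#(\mathcal L\cap\Omega(g))$. So the $c$-th largest value $f_{*(N_1,N_2)}(c/(N_1N_2))$ is $<f_0+\delta$, because fewer than $c$ values reach $f_0+\delta$. It is also $\ge f_0-\delta$, because at least $c$ values reach $f_0-\delta$. Taking $\delta\le\varepsilon_1/2$ gives the claim.

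\textbf{Main obstacle.} The delicate point is Step 1: we need a uniform boundary-layer estimate that counts lattice points in an unbounded-looking region correctly. The saving facts are that $\Omega(g)$ is actually compact and that its boundary is monotone. For monotone regions, the difference between the lattice count and the area is at most the number of cells met by the boundary, which is $O(N_1+N_2)$ cells of area $1/(N_1N_2)$. I would write this staircase estimate out explicitly.

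A minor subtlety is the normalisation in the lattice definition. The points $(m/N_1,n/N_2)$ must lie in $[1,\infty)^2$, which this construction respects. The "existence" of the $r$-th largest value is guaranteed because every level set meets only finitely many lattice points.
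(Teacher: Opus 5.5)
Your proof is correct and is essentially the paper's proof. Step~1 shows $A_{N_1,N_2}(g)\to A(g)$ by covering $\Omega(g)$ with grid cells and bounding the excess by the area of an $O(1/N_1+1/N_2)$-neighbourhood of a finite-length boundary (the paper uses inner and outer covers plus Kneser's theorem), and Step~2 sandwiches the $c$-th sorted value between $A^{-1}(r)\pm\delta$ as the paper does, with your $c=\lceil rN_1N_2\rceil$ in place of the nearest integer.

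One harmless correction. Since $A(1)=0$ (from either closed form, or because $\Omega(1)=\{(1,1)\}$), $A$ maps $(0,1]$ onto $[0,\infty)$, not $[1,\infty)$. So every $r>0$ is admissible, and you only need $\delta$ small enough that $A^{-1}(r)\pm\delta\in(0,1]$.
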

\begin{proof}
\smallskip
\noindent\textbf{Step 1.}
Fix $f_0\in(0,1]$ and write $\Omega=\Omega(f_0)$.
Set
\[
U:=f_0^{-1/t_1},\qquad V:=f_0^{-1/t_2},
\]
so $\Omega\subset [1,U]\times[1,V]$.

For integers $m,n\in\mathbb N$ define the grid rectangles of mesh
$h_1 \coloneqq \frac1{N_1}$, $h_2 \coloneqq\frac1{N_2}$ by
\[
Q_{m,n}:=\Bigl[\frac{m}{N_1},\frac{m+1}{N_1}\Bigr]\times
          \Bigl[\frac{n}{N_2},\frac{n+1}{N_2}\Bigr].
\]
Consider the outer grid approximation
\[
\Omega^{+}_{N_1,N_2}
:=\bigcup_{\left(\frac{m}{N_1},\frac{n}{N_2}\right)\in \Omega} Q_{m,n}.
\]
Then the rectangles $Q_{m, n}$ in the union have disjoint interiors, each has area
$\frac1{N_1N_2}$, and the index set of above is exactly
$\mathcal L(N_1,N_2)\cap \Omega$.
Hence
\[
\mathrm{Area}(\Omega^{+}_{N_1,N_2})
=\frac{1}{N_1N_2}\#\bigl(\mathcal L(N_1,N_2)\cap\Omega\bigr)
= A_{N_1,N_2}(f_0).
\]
Moreover $\Omega\subset \Omega^{+}_{N_1,N_2}$:
given $(x_1,x_2)\in\Omega$, let $m=\lfloor N_1x_1\rfloor$, $n=\lfloor N_2x_2\rfloor$.
Then $(m/N_1,n/N_2)\le (x_1,x_2)$ coordinatewise, and since $f$ is decreasing
in each coordinate, $f(m/N_1,n/N_2)\ge f(x_1,x_2)\ge f_0$, so
$(m/N_1,n/N_2)\in\Omega$ and therefore $(x_1,x_2)\in Q_{m,n}\subset\Omega^+_{N_1,N_2}$.

Similarly define the inner grid approximation
\[
\Omega^{-}_{N_1,N_2}
:=\bigcup_{\left(\frac{m+1}{N_1},\frac{n+1}{N_2}\right)\in \Omega} Q_{m,n}.
\]
If the upper-right corner of $Q_{m,n}$ lies in $\Omega$, then every point of $Q_{m,n}$
has smaller coordinates, hence larger $f$-value, so $Q_{m,n}\subset\Omega$.
Thus $\Omega^{-}_{N_1,N_2}\subset \Omega\subset \Omega^{+}_{N_1,N_2}$.

Now $\partial\Omega$ is a piecewise $C^1$ curve (the graph
$x_1=f_0^{-1/t_1}x_2^{-t_2/t_1}$ plus segments on $x_1=1$ and $x_2=1$),
hence has finite length; call it $P(f_0)$.
Any rectangle counted in $\Omega^{+}_{N_1,N_2}\setminus\Omega^{-}_{N_1,N_2}$
must intersect $\partial\Omega$, and it is contained in the $\sqrt{2}h$-neighborhood
of $\partial\Omega$ with $h:=\max\{h_1,h_2\}$.
By Kneser's theorem for the area of tubular neighborhoods~\cite{kneser1955einige},
the area of this $\sqrt{2}h$-neighborhood is $O(h)$; concretely,
\[
0\le \mathrm{Area}(\Omega^{+}_{N_1,N_2})-\mathrm{Area}(\Omega^{-}_{N_1,N_2})
\le 2 P(f_0)\,h + O(h^2)\xrightarrow[N_1,N_2\to\infty]{}0.
\]
Since $\Omega^{-}_{N_1,N_2}\subset\Omega\subset\Omega^{+}_{N_1,N_2}$, we get
\[
\mathrm{Area}(\Omega^{+}_{N_1,N_2})\to \mathrm{Area}(\Omega)=A(f_0),
\quad\text{i.e.}\quad
A_{N_1,N_2}(f_0)\to A(f_0).
\]

\smallskip
\noindent\textbf{Step 2.}
Let $\tau:=A^{-1}(r)$. Fix any $\delta\in(0,\varepsilon_1)$.
Because $A$ is strictly decreasing and continuous,
\[
A(\tau+\delta)<A(\tau)=r<A(\tau-\delta).
\]
Define the positive margin
\[
\eta:=\frac14\min\bigl\{\,r-A(\tau+\delta),\;A(\tau-\delta)-r\,\bigr\}>0.
\]
By Step~1, choose $M_1,M_2$ so that for all $N_1\ge M_1$, $N_2\ge M_2$,
\[
\bigl|A_{N_1,N_2}(\tau\pm\delta)-A(\tau\pm\delta)\bigr|<\eta.
\]
Then
\[
A(\tau + \delta ) \le r - 4 \eta \quad \text{and} \quad
A_{N_1, N_2}(\tau + \delta) < (r - 4 \eta) + \eta,
\]
hence
\[
A_{N_1,N_2}(\tau+\delta)\le r-2\eta.\]
Similarly,
\[
A_{N_1,N_2}(\tau-\delta)\ge r+2\eta.
\]
Next enlarge $M_1,M_2$ further so that $\frac{1}{2N_1N_2}<\min\{\varepsilon_2,\eta\}$.
Pick $c\in\mathbb N$ to be the nearest integer to $rN_1N_2$.
Then
\[
\Bigl|r-\frac{c}{N_1N_2}\Bigr|<\varepsilon_2
\quad\text{and}\quad
r-\eta<\frac{c}{N_1N_2}<r+\eta,
\]
so in particular
\[
A_{N_1,N_2}(\tau+\delta)<\frac{c}{N_1N_2}<A_{N_1,N_2}(\tau-\delta).
\]

Finally use the interpretation of $A_{N_1,N_2}(\cdot)$ as the normalized counting functional of the superlevel set.
The inequality
$A_{N_1,N_2}(\tau+\delta)<\frac{c}{N_1N_2}$
means that there are fewer than $c$ lattice points whose values are greater than or equal to $\tau + \delta$, hence the $c$-th largest value is \emph{smaller than} $\tau+\delta$.
Likewise $A_{N_1,N_2}(\tau-\delta)>\frac{c}{N_1N_2}$
means that there are more than $c$ lattice points whose values are greater than or equal to $\tau - \delta$, hence the $c$-th largest value is \emph{at most} $\tau-\delta$.
Therefore
\[
\tau-\delta
\;\le\;
f_{*(N_1,N_2)}\Bigl(\frac{c}{N_1N_2}\Bigr)
\;\le\;
\tau+\delta,
\]
and recall that $\tau = A^{-1}(r)$,
\[
\Bigl|A^{-1}(r)-f_{*(N_1,N_2)}\Bigl(\frac{c}{N_1N_2}\Bigr)\Bigr|
\le \delta<\varepsilon_1.
\]
Together with $\bigl|r-\frac{c}{N_1N_2}\bigr|<\varepsilon_2$, this completes the proof.
\end{proof}

\begin{proposition}[Closed form of the sorting function]
Fix $t_1,t_2>0$. For $f_0\in(0,1]$, define the superlevel set
\[
\Omega(f_0)\;:=\;\bigl\{(x_1,x_2)\in[1,\infty)^2:\;x_1^{-t_1}x_2^{-t_2}\ge f_0\bigr\},
\]
and let $A(f_0):=\operatorname{Area}\,\Omega(f_0)$. Then:
\begin{enumerate}
\item If $t_1\neq t_2$,
\[
A(f_0)\;=\;1+\frac{t_1\,f_0^{-1/t_1}-t_2\,f_0^{-1/t_2}}{\,t_2-t_1\,}.
\]
\item If $t_1=t_2=t$,
\[
A(f_0)\;=\;f_0^{-1/t}\!\Bigl(\tfrac{1}{t}\ln\tfrac{1}{f_0}-1\Bigr)+1.
\]
\end{enumerate}
Moreover, $A:(0,1]\to[1,\infty)$ is strictly decreasing and hence bijective; the
continuous rank--frequency curve is the inverse
\[
f_*(r)\;=\;A^{-1}(r)\qquad(r\ge 1).
\]
\end{proposition}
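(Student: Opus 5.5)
The plan is to compute $A(f_0)$ directly by Fubini, slicing $\Omega(f_0)$ along horizontal lines, and then to read monotonicity off the closed form. Put $U:=f_0^{-1/t_1}$ and $V:=f_0^{-1/t_2}$, as in the proof of Proposition~\ref{prop:sorting_limit}. Since $f_0\le 1$, the condition $x_1^{-t_1}x_2^{-t_2}\ge f_0$ with $x_1,x_2\ge 1$ is equivalent to
\[
1\le x_1\le U\,x_2^{-t_2/t_1}.
\]
This interval is nonempty exactly when $x_2\le V$. Hence
\[
A(f_0)=\int_1^{V}\bigl(U x_2^{-k}-1\bigr)\,dx_2,\qquad k:=t_2/t_1 .
\]

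\textbf{Case $t_1\neq t_2$, so $k\neq 1$.} Integrating gives $U\,(V^{1-k}-1)/(1-k)-(V-1)$. The key simplification is the identity
\[
U V^{1-k}=f_0^{-1/t_1-(1-k)/t_2}=f_0^{-1/t_2}=V.
\]
With it, the integral becomes $(V-U)/(1-k)-V+1=t_1(V-U)/(t_1-t_2)-V+1$. Putting this over the common denominator $t_1-t_2$ gives $1+(t_1U-t_2V)/(t_2-t_1)$, which is item 1.

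\textbf{Case $t_1=t_2=t$, so $k=1$.} Here $U=V=f_0^{-1/t}$, and the integral is $U\ln V-(V-1)$. Substituting $\ln V=\tfrac1t\ln\tfrac1{f_0}$ gives item 2.

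\textbf{Monotonicity and bijectivity.} First, monotonicity follows from the nesting $\Omega(f_0')\subsetneq\Omega(f_0)$ for $f_0<f_0'$. The difference set contains a nonempty open piece near the boundary curve, so it has positive area. As a cross-check from item 1,
\[
A'(f_0)=\frac{f_0^{-1/t_2-1}-f_0^{-1/t_1-1}}{t_2-t_1}<0,
\]
since the numerator and $t_2-t_1$ have opposite signs when $f_0<1$. The analogous check works in the equal case. Next, $A$ is continuous, and $A(f_0)\to\infty$ as $f_0\to0^+$ because $U$ or $V$ blows up. Together these make $A$ a continuous strictly decreasing bijection onto its image. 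Its inverse is therefore the continuous rank--frequency curve $f_*=A^{-1}$, in the sense of Proposition~\ref{prop:sorting_limit}.

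\textbf{Main obstacle: the range of $A$.} The closed forms give $A(1)=0$, not $1$; indeed $\Omega(1)=\{(1,1)\}$. So the honest statement is that $A$ maps $(0,1]$ bijectively onto $[0,\infty)$. Restricting to $[1,\infty)$, i.e.\ $r\ge1$, is then a restriction to the subinterval $(0,A^{-1}(1)]$, and I would phrase the final claim that way.
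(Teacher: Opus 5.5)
Your argument is correct and matches the paper's proof. It uses the same Fubini slicing in $x_2$ to get $A(f_0)=\int_1^{V}(U x_2^{-t_2/t_1}-1)\,dx_2$, the same two-case evaluation (your identity $UV^{1-k}=V$ just spells out the paper's one-line simplification), and the same sign check on $A'$. Your observation that $A(1)=0$ is also correct: $A$ is a bijection of $(0,1]$ onto $[0,\infty)$, not onto $[1,\infty)$, so $f_*=A^{-1}$ on $r\ge 1$ is a restriction to $(0,A^{-1}(1)]$, which repairs a misstated codomain that the paper's proof passes over.
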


\begin{proof}
From $x_1^{-t_1}x_2^{-t_2}\ge f_0$ and $x_1,x_2\ge 1$ we obtain
\[
x_1 \;\le\; f_0^{-1/t_1}\,x_2^{-t_2/t_1}.
\]
Thus for each fixed $x_2\ge1$ the admissible $x_1$–interval is
\[
x_1\in\Bigl[\,1,\;f_0^{-1/t_1}x_2^{-t_2/t_1}\Bigr],
\]
which contributes positive width only when $x_2\le f_0^{-1/t_2}$. Hence
\[
\Omega(f_0)=\Bigl\{(x_1,x_2):\;1\le x_2\le f_0^{-1/t_2},\;\;1\le x_1\le f_0^{-1/t_1}x_2^{-t_2/t_1}\Bigr\},
\]
and
\[
A(f_0)=\int_{1}^{\,f_0^{-1/t_2}}\Bigl(f_0^{-1/t_1}x_2^{-t_2/t_1}-1\Bigr)\,dx_2.
\]

\noindent
If $t_1\neq t_2$, writing $\alpha:=t_2/t_1\neq 1$,
\[
\begin{aligned}
A(f_0)
&= f_0^{-1/t_1}\,\frac{x_2^{\,1-\alpha}}{\,1-\alpha\,}\Big|_{1}^{\,f_0^{-1/t_2}}
\;-\; \bigl(x_2\big|_{1}^{\,f_0^{-1/t_2}}\bigr)\\[1mm]
&= 1+\frac{t_1\,f_0^{-1/t_1}-t_2\,f_0^{-1/t_2}}{\,t_2-t_1\,}.
\end{aligned}
\]

\noindent
If $t_1=t_2=t$, then the upper bound is $f_0^{-1/t}x_2^{-1}$ and
hence
\[
\begin{aligned}
A(f_0)
&=\int_{1}^{f_0^{-1/t}}\Bigl(f_0^{-1/t}x_2^{-1}-1\Bigr)\,dx_2\\
&= f_0^{-1/t}\,\ln\!\bigl(f_0^{-1/t}\bigr)-\bigl(f_0^{-1/t}-1\bigr)
= f_0^{-1/t}\!\Bigl(\tfrac{1}{t}\ln\tfrac{1}{f_0}-1\Bigr)+1.
\end{aligned}
\]

\noindent
Finally, differentiating either expression in $f_0$ shows $A'(f_0)<0$, so $A$ is strictly decreasing on $(0,1]$ and admits a well-defined inverse $A^{-1}$, which yields the continuous rank–frequency curve $f_*(r)=A^{-1}(r)$.
\end{proof}

\end{document}